\newdefinition{rmk}{Remark}
\newproof{pf}{Proof}
\begin{document}

\newcommand{\be}{\begin{equation}}
\newcommand{\ee}{\end{equation}}
\newcommand{\Om}{\Omega}
\newcommand{\h}{H^1_0(\Omega)}
\newcommand{\lt}{L^2(\Omega)}
\theoremstyle{plain}
\newtheorem{thm}{Theorem}[section]
\newtheorem{cor}[thm]{Corollary}
\newtheorem{lem}[thm]{Lemma}
\newtheorem{prop}[thm]{Proposition}
\theoremstyle{definition}
\newtheorem{defn}{Definition}[section]
\theoremstyle{remark}
\newtheorem{rem}{Remark}[section]
\numberwithin{equation}{section}
\renewcommand{\theequation}{\thesection.\arabic{equation}}
\numberwithin{equation}{section}
\begin{frontmatter}



\title{ A nonlinear eigenvalue problem arising in  a nanostructured quantum dot}
\author[ya]{Abbasali Mohammadi \corref{cor1}}
\ead{mohammadi@yu.ac.ir}
\author[rvt]{Fariba Bahrami}
\ead{fbahram@tabrizu.ac.ir} \cortext[cor1]{Corresponding author}
\address[ya]{Department of Mathematics, College of Sciences,
Yasouj University, Yasouj, Iran, 75914-353 }
\address[rvt]{Faculty of Mathematical Sciences, University of
Tabriz, 29 Bahman St., Tabriz, Iran, 51665-163 }


\begin{abstract}
In this paper we investigate a minimization problem related to the
principal eigenvalue of the $s$-wave  Schr\"{o}dinger operator.
The operator depends nonlinearly on the eigenparameter. We prove
the existence of a solution for the optimization problem and the
uniqueness will be addressed when the domain is a ball. The
optimized solution can be applied to design new electronic and
photonic devices based on the quantum dots.
\end{abstract}

\begin{keyword}
  $s$-Wave  Schr\"{o}dinger Operator\sep Optimization  Problems \sep
Nanostructured Quantum Dots\sep Rearrangement


\MSC 35Q93 \sep 35Q40 \sep 35P15 \sep 35J10
\end{keyword}

\end{frontmatter}


\section{Introduction}\label{intro}
Quantum dot nanostructures  have attracted broad interest   in the past few years because of their unique
physical properties and potential  applications in micro- and nanoelectronic
devices.
 In such nanostructures, the  free carriers are confined  to a small region of
space by potential barriers. If the size of this region is less than the electron wavelength, the electronic states become
quantized at discrete energy levels.  Due to
the possibility of precise control over the conductivity  by
adjusting the energy levels via the configuration, quantum dot
structures have received tremendous attention from many physicists and
scientists \cite{mas}. The problem of finding
the energy states in these structures is regarded as
 an essential step to study the optical and electrical
properties.

 Motivated by the above explanation, in this paper  we consider a
 nanostructure quantum dot and the Schr\"{o}dinger equation
 governing it.  We  discus an efficient  method that is capable  to predict the
configuration which has a minimum  ground  state energy.

Let us introduce the mathematical equations modeling the structure
and an associated optimization problem. Let $\Omega$ be  a bounded
connected set in $\mathbb{R}^n$ with smooth boundary. Suppose that $p_0$
and $q_0$ are two Lebesgue measurable functions satisfying $0\leq
p_0, q_0\leq h$ in $\Omega$, where $h$ is a positive constant. To
avoid trivial situations, we assume that $p_0$ and $q_0$ are not constant
functions. Define $\mathcal{P}$
 and $\mathcal{Q}$ as the family of all measurable functions which
 are rearrangements of $p_0$ and $q_0$ respectively. For $p \in  \mathcal{P}$ and  $q \in  \mathcal{Q}$, the governing  Hamiltonian equation is the following  $s$-wave Schr\"{o}dinger
 equation, \cite{swave},
\begin{equation}
  \label{mpde}
-\frac{\hbar^2}{2m}\Delta u + q(x)u+ 2\lambda p(x) u= \lambda ^2
u,\quad \mathrm{in}\quad\Omega, \quad u=0, \qquad
\mathrm{on}\quad\partial \Omega,
\end{equation}
where  $\hbar$ stands
for  Planck 's constant, $m$ is the mass of particle, $\lambda$ is
the first eigenvalue (ground state energy) and $u$ is the
corresponding eigenfunction (wave function).

 In Schr\"{o}dinger \eqref{mpde}, the potential function is of the form
 \begin{equation*}
 V(\lambda,x)= q(x)+ 2\lambda p(x),
 \end{equation*}
 where it depends on the ground state energy.
 Let us mention that $\lambda$ can be described as a function of $p$ and $q$. Hence, we use notation $\lambda_{p,q}$ to emphasize its dependence on  $p$ and $q$.

  We seek potentials that minimize the first eigenvalue corresponding to equation \eqref{mpde} relative to $  \mathcal{P}$ and  $  \mathcal{Q}$.
 To determine the
potential  which gives  the minimum  ground  state energy,
we should study the following minimization problem
\begin{equation}
  \label{mmp}
\inf_{p \in  \mathcal{P},q \in  \mathcal{Q} }\lambda_{p,q}.
\end{equation}

 These type of optimization problems for eigenvalues of  linear or nonlinear elliptic partial
 differential equations have been intensively attractive  to mathematicians in the past decades.  They
 have several applications as for instance the stability of vibrating bodies, the propagation of waves
 in composite media and the thermic insulation of conductors; see \cite{henrot} for an overview of the topic.
 However, it should be mentioned that the majority
of the investigated nonlinear models are nonlinear in their differential
operator part \cite{emamipro,cuccu,derlet}.

 Equation \eqref{mpde} can be regarded as a nonlinear elliptic eigenvalue problem such that
 the nonlinearity is originated from the nonlinear dependence on the eigenvalue.
  We note that such systems have been under less
attention in this field of study \cite{abbasali1}.
 In the linear  problems,
 the analysis of the eigenvalues is based essentially on the Rayleigh quotient associated with the eigenvalues.
  For nonlinear eigenvalue problem \eqref{mpde}, we should apply the  Rayleigh functional corresponding
  to the eigenvalue. In this paper we extend rearrangements techniques to find an optimal  eigenvalue
  of a nonlinear problem. This eigenvalue minimization  problem is more difficult than that of the linear problems due
  to the more complicated form of the Rayleigh   functional. We hope this paper would be a motivation to further study in this direction.

 One can find some quantum dot models where the  Schr\"{o}dinger
 equations governing  them are nonlinear with respect to the energy \cite{Wang,yim1,yim2}.

 Our paper is organized as follows. In the next section we review
rearrangement theory with an eye on the optimization problem
\eqref{mmp}. In the third section we derive a formula for the
first eigenvalue of the problem \eqref{mpde}. Then we prove
the existence of a solution to the problem \eqref{mmp}. In the
fourth section, we examine the uniqueness problem and we shall
investigate the configuration of the unique solution. In the last
section, we will give an overview of our results with a numerical
example which shows the  physical significance of the findings.

\section{Preliminaries}\label{pre}
In this section we state some results from the rearrangement theory  related to our optimization problem \eqref{mmp}. The reader can refer to  \cite{bu89,Alvino} 
for further information about the rearrangement theory. In this paper, we denote with $|\mathcal{A}| $ the Lebesgue measure of the measurable set $\mathcal{A}\subset\Bbb{R}^n$.

 Two Lebesgue measurable functions $p:\Om \rightarrow \Bbb{R}$, $p_0:\Om \rightarrow \Bbb{R}$  are said to be rearrangements of each other if
 \begin{equation}\label{rea}
|\{x\in \Om : p(x)\geq \alpha\}|=|\{x\in \Om: p_0(x)\geq
\alpha\}|\qquad~\quad\forall\alpha\in \mathbb{R}.
\end{equation}
The notation $p\sim p_0$ means that $p$ and $p_0$ are rearrangements of each other. Consider $p_0:\Om \rightarrow \Bbb{R}$. The class of rearrangements generated by $p_{0}$, denoted $\mathcal{P}$, is defined as follows
 \begin{equation*}
\mathcal{P}=\{p:p\sim p_{0}\}.
\end{equation*}
Consider a function $q\in L^r(\Om)$, $r\geq 1$. A level set of this function is
\begin{equation*}
\{x\in \Omega:\quad q(x)=\alpha\}, \qquad \alpha \in \Bbb{R}.
\end{equation*}
  Throughout this paper we shall write increasing instead  of non-decreasing, and decreasing instead of non-increasing. The following two lemmas were proved in \cite{bu89}.
\begin{lem}\label{ber1}
 Let $p\in L^r(\Omega)$, $r>1$,  and let $q\in L^s(\Omega),s=r/(r-1)$. Suppose that every level set of $q$  has measure zero. Then, there exists an increasing function $\xi:\Bbb{R}\rightarrow \Bbb{R}$  such that $\xi(q)$ is a rearrangement of $p$. Furthermore, there exists a decreasing function $\eta:\Bbb{R}\rightarrow \Bbb{R}$  such that $\eta(q)$ is a rearrangement of $p$.
\end{lem}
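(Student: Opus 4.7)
The plan is to reduce the statement to a transport of measure along the distribution function of $q$. Define
\begin{equation*}
\mu_q(t) := |\{x \in \Omega : q(x) > t\}|, \qquad t \in \mathbb{R},
\end{equation*}
which is decreasing. The hypothesis that every level set of $q$ has zero measure is exactly what forces $\mu_q$ to be continuous on $\mathbb{R}$, because a jump of $\mu_q$ at $t$ equals $|\{q=t\}|$. Let $p^{\Delta}:[0,|\Omega|]\to\mathbb{R}$ denote the right-continuous decreasing rearrangement of $p$, so by construction $|\{s\in(0,|\Omega|):p^{\Delta}(s)>\alpha\}|=|\{p>\alpha\}|$ for every $\alpha$.

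With this in hand, I would set $\xi := p^{\Delta}\circ\mu_q$. Since both $p^{\Delta}$ and $\mu_q$ are decreasing, the composition is increasing in the paper's convention, so the regularity claim for $\xi$ is immediate. The only substantive step is to show $\xi(q)\sim p$, which I would derive from the auxiliary fact that the map $\Phi(x):=\mu_q(q(x))$ transports Lebesgue measure on $\Omega$ to Lebesgue measure on $(0,|\Omega|)$. Fixing $s\in(0,|\Omega|)$, continuity of $\mu_q$ lets me select $t$ with $\mu_q(t)=s$, and monotonicity of $\mu_q$ together with the level-set hypothesis give
\begin{equation*}
|\{\Phi<s\}|=|\{q>t\}|=\mu_q(t)=s.
\end{equation*}
(If $\mu_q$ has a flat piece at level $s$, picking the right endpoint of that interval as $t$ works, because such a plateau corresponds to values of $q$ not attained on a set of positive measure.) Granting the measure-preserving property, a direct pushforward computation gives
\begin{equation*}
|\{\xi(q)>\alpha\}|=|\{s\in(0,|\Omega|):p^{\Delta}(s)>\alpha\}|=|\{p>\alpha\}|,
\end{equation*}
so $\xi(q)$ and $p$ are rearrangements of each other.

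For the decreasing case, I would simply apply the part just proved to $-q$ in place of $q$; the level sets of $-q$ coincide with those of $q$, hence have measure zero, so there exists an increasing $\tilde\xi$ with $\tilde\xi(-q)\sim p$. Setting $\eta(t):=\tilde\xi(-t)$ produces a decreasing function with $\eta(q)\sim p$. The main obstacle is the measure-preserving property of $\Phi$: the level-set hypothesis is indispensable there, because flat pieces of $\mu_q$ (corresponding to values that $q$ does not attain) are benign, whereas a jump of $\mu_q$ (corresponding to a level set of positive measure) would concentrate mass under $\Phi_\ast$ and destroy the pushforward identity.
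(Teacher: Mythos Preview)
The paper does not give its own proof of this lemma; it simply records the statement and cites Burton~\cite{bu89} for the argument. Your construction---composing the one-dimensional decreasing rearrangement $p^{\Delta}$ with the distribution function $\mu_q$ of $q$---is exactly the standard route (and is essentially Burton's). The two points you isolate are the right ones: the level-set hypothesis is equivalent to continuity of $\mu_q$, and continuity is what makes $\Phi=\mu_q\circ q$ push Lebesgue measure on $\Omega$ forward to Lebesgue measure on $(0,|\Omega|)$, after which the equimeasurability $\xi(q)\sim p$ is a direct computation. Your handling of flat pieces of $\mu_q$ (choosing the right endpoint of the plateau, and noting that such plateaus correspond to $q$-values hit only on a null set) is correct, as is the reduction of the decreasing case to the increasing one via $q\mapsto -q$. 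One cosmetic point: $p^{\Delta}$ is a priori defined on $(0,|\Omega|)$ and may blow up at $0$ if $p$ is unbounded, so to obtain a genuine map $\xi:\mathbb{R}\to\mathbb{R}$ you should either extend $p^{\Delta}$ to $[0,|\Omega|]$ by one-sided limits or simply note that $\{\Phi=0\}\cup\{\Phi=|\Omega|\}$ is null, so the values of $\xi$ there are irrelevant to the rearrangement identity. With that caveat, the argument is complete.
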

 We denote with $\overline{\mathcal{P} } $ the weak closure of $\mathcal{P}$ in $L^r(\Omega)$.
\begin{lem}\label{ber2}
Let $\mathcal{P}$ be the set of rearrangements of a fixed function
$p_{0}\in L^r(\Omega)$, $r>1$, $p_{0}\not\equiv 0$, and let $q\in
L^s(\Omega)$, $s=r/(r-1)$, $q\not\equiv 0$. If there is an
increasing function $\xi$  such that $\xi(q)\in \mathcal{P}$, then
\begin{equation*}
\int_{\Omega} p q dx \leq \int_{\Omega} \xi(q)q dx~~~~~~~
\qquad\qquad\forall ~p \in\overline{\mathcal{P}},
\end{equation*}
and the function $\xi(q)$ is the unique maximizer relative to
$\overline{\mathcal{P}}$. Furthermore, if there is a decreasing
function $\eta$ such that $\eta(q)\in \mathcal{P}$, then
\begin{equation*}
\int_{\Omega} p q dx \geq \int_{\Omega} \eta(q)q
dx~~~~~~~\qquad\qquad \forall ~p \in\overline{\mathcal{P}},
\end{equation*}
and the function $\eta(q)$ is the unique minimizer relative to
$\overline{\mathcal{P}}$.
\end{lem}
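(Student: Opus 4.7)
The plan is to prove the maximization half in three stages---establish the inequality for rearrangements $p\in\mathcal{P}$ by a classical layer-cake argument, extend it to $\overline{\mathcal{P}}$ by weak continuity, and finally pin down uniqueness through a level-set rigidity statement---and then deduce the minimization half from the maximization one.

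For the first stage, I would fix $p\in\mathcal{P}$ and use the Hardy--Littlewood layer-cake representation
\begin{equation*}
\int_\Omega pq\,dx = \iint \bigl|\{p>\alpha\}\cap\{q>\beta\}\bigr|\,d\alpha\,d\beta
\end{equation*}
(with the standard sign corrections for the negative parts). Because $\xi$ is increasing, the superlevel sets of $\xi(q)$ coincide up to null sets with superlevel sets of $q$, so the family $\bigl\{\{\xi(q)>\alpha\},\{q>\beta\}\bigr\}$ is totally ordered by inclusion. Coupled with $|\{p>\alpha\}|=|\{\xi(q)>\alpha\}|$ (since $p\sim\xi(q)\sim p_0$), this yields pointwise
\begin{equation*}
\bigl|\{p>\alpha\}\cap\{q>\beta\}\bigr|\le \min\bigl(|\{p>\alpha\}|,|\{q>\beta\}|\bigr) = \bigl|\{\xi(q)>\alpha\}\cap\{q>\beta\}\bigr|,
\end{equation*}
and integration delivers $\int_\Omega pq\,dx\le\int_\Omega \xi(q)q\,dx$.

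The second stage is a weak-continuity argument: since $q\in L^s$ with $1/r+1/s=1$, the map $p\mapsto\int_\Omega pq\,dx$ is continuous on $L^r$ with its weak topology, so given $\tilde p\in\overline{\mathcal{P}}$ and $p_n\in\mathcal{P}$ with $p_n\rightharpoonup\tilde p$, passing to the limit in the first-stage inequality yields $\int_\Omega \tilde p\, q\,dx\le\int_\Omega \xi(q)q\,dx$.

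The main obstacle is uniqueness in the third stage. Here I would exploit the bathtub-type bound
\begin{equation*}
\int_E \tilde p\,dx \le \int_0^{|E|} p_0^*(t)\,dt,
\end{equation*}
valid for every measurable $E\subset\Omega$ and every $\tilde p\in\overline{\mathcal{P}}$ (it holds on $\mathcal{P}$ by the classical rearrangement inequality and extends to $\overline{\mathcal{P}}$ by weak convergence on the finite-measure set $E$). Substituting $E=\{q>\beta\}$ and integrating in $\beta$ reproduces the upper bound $\int_\Omega \xi(q)q\,dx$. An extremal $\tilde p$ must therefore achieve equality in the bathtub bound for a.e.\ $\beta$, which forces the superlevel sets of $\tilde p$ to coincide (up to null sets) with those of $q$ at every relevant level. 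This rigidity identifies $\tilde p$ as an increasing function of $q$ with the distribution of $p_0$, namely $\tilde p=\xi(q)$ a.e. The minimizer statement for $\eta$ follows either by applying the same argument to $-p_0$ (whose increasing rearrangement relative to $q$ is $-\eta(q)$) or directly by reversing the inequalities in the layer-cake computation.
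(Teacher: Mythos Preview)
The paper does not prove this lemma; it simply records that it was established by Burton \cite{bu89} and quotes the statement. So there is no ``paper's proof'' to compare against line by line---your proposal is an attempt to supply what the paper outsources.

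Your first two stages are standard and correct. The layer-cake/Hardy--Littlewood computation for $p\in\mathcal{P}$ is exactly how one shows the inequality on the rearrangement class, and the passage to $\overline{\mathcal{P}}$ by weak continuity of the linear functional $p\mapsto\int_\Omega pq\,dx$ is immediate since $q\in L^s$ is the dual exponent.

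The uniqueness stage is where your sketch becomes thin. The bathtub inequality $\int_E\tilde p\,dx\le\int_0^{|E|}p_0^*(t)\,dt$ for all measurable $E$ and all $\tilde p\in\overline{\mathcal{P}}$ is correct, and integrating it over $\beta$ with $E=\{q>\beta\}$ does reproduce the upper bound. But the step ``equality for a.e.\ $\beta$ forces the superlevel sets of $\tilde p$ to coincide with those of $q$'' is asserted rather than proved. A priori $\tilde p$ lies only in $\overline{\mathcal{P}}$, so it need not be equimeasurable with $p_0$ at all; you must first deduce from the saturated bathtub inequalities that $\tilde p$ actually has the distribution of $p_0$, and then that it is measurable with respect to the $\sigma$-algebra generated by $q$. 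Both facts are true, but they require an additional argument (for instance, showing that on each ``shell'' $\{\beta_1<q\le\beta_2\}$ the function $\tilde p$ is forced to equal the constant value $\xi$ assigns there, using that $\tilde p$ is sandwiched between the essential bounds of $p_0$ on $\overline{\mathcal{P}}$ together with the integral constraints). Burton's original proof handles this via the characterization of $\overline{\mathcal{P}}$ as a weakly compact convex set whose extreme points are precisely $\mathcal{P}$, combined with a strict-inequality argument; your bathtub route can be completed, but as written the rigidity conclusion is a gap that needs filling.
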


\begin{lem}\label{ber3} Consider the rearrangement class $\mathcal{P}\subset L^r(\Omega)$ generated by $p_0$ and $r\geq1$.
Then  $\|p\|_{L^r(\Omega)}=\|p_0\|_{L^r(\Omega)}$ for every $p\in
\mathcal{P}$.
\end{lem}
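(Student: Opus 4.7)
The plan is to show that the $L^r$ norm of a function depends only on its distribution function, so any two rearrangements of each other must share the same norm. I will first translate the definition in \eqref{rea} (which involves the signed level sets $\{p\geq\alpha\}$) into the corresponding statement about the distribution functions of $|p|$ and $|p_0|$; then I will invoke the layer cake representation of the $L^r$ norm.

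First I would fix $p\in\mathcal{P}$, so $p\sim p_0$, and observe that for any $\alpha\geq 0$ the sets $\{|p|\geq\alpha\}$ and $\{|p|\leq-\alpha\}$ decompose as $\{p\geq\alpha\}\cup\{p\leq-\alpha\}$ (disjointly, up to a null set). Taking complements in \eqref{rea} gives $|\{p\leq\alpha\}|=|\{p_0\leq\alpha\}|$ for all $\alpha$, so $|\{p\leq-\alpha\}|=|\{p_0\leq-\alpha\}|$. Combining with \eqref{rea} yields
\begin{equation*}
\bigl|\{x\in\Omega:|p(x)|\geq\alpha\}\bigr|=\bigl|\{x\in\Omega:|p_0(x)|\geq\alpha\}\bigr|\qquad\forall\alpha\geq 0.
\end{equation*}

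Second, applying the layer cake formula (Fubini's theorem applied to $|f(x)|^r=\int_0^{|f(x)|}r s^{r-1}\,ds$) gives, for any measurable $f$ on $\Omega$,
\begin{equation*}
\int_\Omega |f(x)|^r\,dx=\int_0^\infty r s^{r-1}\,\bigl|\{x\in\Omega:|f(x)|\geq s\}\bigr|\,ds.
\end{equation*}
Using this identity for both $p$ and $p_0$ and the distribution-function equality from the previous step, one obtains $\|p\|_{L^r(\Omega)}^r=\|p_0\|_{L^r(\Omega)}^r$. The case $r=1$ is covered by the same computation with $r s^{r-1}$ replaced by $1$.

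There is no real obstacle here beyond being careful about the signed-versus-absolute distribution function when passing from the definition \eqref{rea} (which covers all real $\alpha$) to the layer cake formula (which only sees $|p|$). Everything else is a direct application of Fubini.
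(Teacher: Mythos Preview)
Your argument is correct: the layer cake representation shows that the $L^r$ norm is determined entirely by the distribution function of $|p|$, and equimeasurable functions share that distribution function. The paper itself does not prove this lemma; it simply refers the reader to Burton~\cite{bu89}. So there is no proof in the paper to compare against, and what you have written is precisely the standard argument one would find in that reference.

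One small slip of the pen: in your first step you wrote ``the sets $\{|p|\geq\alpha\}$ and $\{|p|\leq-\alpha\}$ decompose as\ldots'', but $\{|p|\leq-\alpha\}$ is empty for $\alpha>0$ and has no business in that sentence. You clearly meant only that, for $\alpha>0$, $\{|p|\geq\alpha\}=\{p\geq\alpha\}\cup\{p\leq-\alpha\}$ as a disjoint union. Also, in the specific setting of this paper all members of $\mathcal{P}$ are nonnegative (see Section~\ref{intro}), so the signed-versus-absolute manoeuvre is not actually needed here; still, your treatment is correct for the lemma as stated in full generality. Finally, the aside about $r=1$ is unnecessary: for $r=1$ one already has $rs^{r-1}=1$, so the same formula applies without modification.
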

\begin{proof}
See \cite{bu89}.
\end{proof}
Let us state here one of the essential tools in  studying rearrangement
optimization problems, see \cite{bu89}.

\begin{lem}\label{maxre}
Let $\mathcal{P}$ be the set of rearrangements of a fixed function
$p_{0}\in L^r(\Omega)$, $r>1$, $p_{0}\not\equiv 0$, and let $g\in
L^s(\Omega)$, $s=r/(r-1)$, $g\not\equiv 0$. Then there exists
$\widetilde{p}$ in $\mathcal{P}$ such that
\begin{equation*}
\int_{\Omega} pg dx\leq \int_{\Omega} \widetilde{p}g dx,
\end{equation*}
for every $p$ in $\mathcal{P}$.

\end{lem}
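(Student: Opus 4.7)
The strategy is to first obtain a maximizer of $F(p) := \int_\Omega p g\,dx$ in the weakly compact set $\overline{\mathcal{P}}$, and then to conclude, via the characterization of $\mathcal{P}$ as the set of extreme points of $\overline{\mathcal{P}}$, that a maximizer can actually be chosen in $\mathcal{P}$ itself.

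First I would establish the weak compactness of $\overline{\mathcal{P}}$. By Lemma \ref{ber3}, every $p \in \mathcal{P}$ satisfies $\|p\|_{L^r(\Omega)} = \|p_0\|_{L^r(\Omega)}$, so $\mathcal{P}$ is norm-bounded in the reflexive space $L^r(\Omega)$ (using $r > 1$); hence its weak closure $\overline{\mathcal{P}}$ is weakly compact and convex. Since $g \in L^s(\Omega)$ with $s = r/(r-1)$, the linear functional $F$ is bounded on $L^r$ by H\"older's inequality, hence weakly continuous; consequently it attains its supremum over $\overline{\mathcal{P}}$ at some $\widehat p \in \overline{\mathcal{P}}$.

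To exhibit a maximizer inside $\mathcal{P}$ I would exploit the perturbation framework built on Lemmas \ref{ber1} and \ref{ber2}: approximate $g$ in $L^s$-norm by a sequence $(g_n)$ all of whose level sets have measure zero (obtainable by small perturbations on the positive-measure level sets of $g$). For each $n$, Lemma \ref{ber1} supplies an increasing $\xi_n$ with $p_n := \xi_n(g_n) \in \mathcal{P}$, and Lemma \ref{ber2} identifies $p_n$ as the unique maximizer of $p \mapsto \int_\Omega p g_n\,dx$ over $\overline{\mathcal{P}}$. The bounded sequence $(p_n)$ has a weakly convergent subsequence $p_{n_k} \rightharpoonup \widetilde p \in \overline{\mathcal{P}}$, and passing to the limit in
\[
\int_\Omega p\, g_{n_k}\,dx \le \int_\Omega p_{n_k} g_{n_k}\,dx, \qquad \forall\, p \in \overline{\mathcal{P}},
\]
using the weak-strong pairing (with $g_{n_k} \to g$ in $L^s$ and $p_{n_k} \rightharpoonup \widetilde p$ in $L^r$), yields $F(p) \le F(\widetilde p)$ for all $p \in \overline{\mathcal{P}}$.

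The main obstacle is to conclude $\widetilde p \in \mathcal{P}$ rather than only $\overline{\mathcal{P}}$. I would address this by invoking the theorem, due to Ryff and extended to $L^r$ in Burton \cite{bu89}, which identifies the extreme points of $\overline{\mathcal{P}}$ with $\mathcal{P}$. Applying the Bauer maximum principle to the linear, weakly continuous $F$ on the convex weakly compact set $\overline{\mathcal{P}}$ gives a maximizer at an extreme point, which is therefore in $\mathcal{P}$. This extreme-point identification is the delicate step; the surrounding argument is a routine weak-compactness plus H\"older-continuity exercise.
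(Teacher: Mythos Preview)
The paper does not prove this lemma; it merely cites Burton \cite{bu89} as the source. Your proposal is essentially a sketch of Burton's own argument, so in that sense it is fully aligned with the paper's (deferred) approach and is correct.

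Two minor remarks. First, the perturbation detour through $g_n$ with null level sets is unnecessary once you invoke the Bauer maximum principle together with the extreme-point characterization $\mathrm{ext}(\overline{\mathcal{P}})=\mathcal{P}$: the latter already delivers a maximizer in $\mathcal{P}$ directly, so the middle paragraph can be dropped without loss. Second, the assertion that $\overline{\mathcal{P}}$ is convex is not automatic from taking the weak closure of a bounded set; it is itself a nontrivial result of Burton (namely that $\overline{\mathcal{P}}$ coincides with the convex set of functions majorized by $p_0$ in the Hardy--Littlewood sense), and you are tacitly relying on it when you apply Bauer's principle.
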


 Let us note that in this paper for a measurable function $q$ on
 $\Omega$ the strong support (or simply support) of $q$ is
 $\mathrm{supp}(q)=\{x\in \Omega :\quad q(x)>0\}$.  We finish this section with a technical assertion.
\begin{lem}\label{weak}
Let $\{p_k\}^\infty_1$ be a sequence of functions in $L^2(\Om)$ which
converges weakly to $p$ and satisfies
\begin{equation*}
0\leq p_k(x)\leq M,
\end{equation*}
almost everywhere in $\Om$. Then, we have
\begin{equation*}
0\leq p(x)\leq M,
\end{equation*}
almost everywhere in $\Om$.

\end{lem}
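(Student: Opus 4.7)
The plan is to exploit the duality pairing that defines weak convergence in $L^2(\Omega)$. Since $\Omega$ is bounded, the characteristic function $\chi_A$ of any measurable subset $A\subset\Omega$ lies in $L^2(\Omega)$, and therefore it is an admissible test function against which $\{p_k\}$ converges. Both inequalities $p\ge 0$ and $p\le M$ a.e. will be established by contradiction, applying the weak convergence to carefully chosen characteristic functions.

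For the lower bound, I would set $A=\{x\in\Omega: p(x)<0\}$ and assume, for contradiction, that $|A|>0$. Since $p_k(x)\ge 0$ almost everywhere, we have
\begin{equation*}
\int_\Omega p_k\,\chi_A\,dx=\int_A p_k\,dx\ge 0\qquad\text{for every }k.
\end{equation*}
Using $\chi_A\in L^2(\Omega)$ and weak convergence $p_k\rightharpoonup p$, the left-hand side converges to $\int_A p\,dx$, which must therefore be $\ge 0$. On the other hand, $p<0$ on $A$ and $|A|>0$ force $\int_A p\,dx<0$, a contradiction. Hence $|A|=0$, i.e.\ $p\ge 0$ a.e.\ in $\Omega$.

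For the upper bound, I would repeat the argument with $B=\{x\in\Omega: p(x)>M\}$, assume $|B|>0$, and test against $\chi_B\in L^2(\Omega)$. From $p_k\le M$ a.e.\ one obtains $\int_B(p_k-M)\,dx\le 0$ for every $k$, and passing to the limit (again permissible because $\chi_B\in L^2(\Omega)$) yields $\int_B(p-M)\,dx\le 0$. This contradicts $p>M$ on the positive-measure set $B$, so $|B|=0$ and $p\le M$ a.e.

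There is essentially no deep obstacle: the only thing to keep in mind is that the boundedness of $\Omega$ guarantees $\chi_A,\chi_B\in L^2(\Omega)$, so that weak convergence can be applied to these test functions. Equivalently, one could observe that the set $\{f\in L^2(\Omega): 0\le f\le M \text{ a.e.}\}$ is convex and strongly closed, hence weakly closed by Mazur's theorem, yielding the result in one line; the explicit contradiction argument above is included for transparency.
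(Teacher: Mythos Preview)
Your proof is correct. For the lower bound $p\ge 0$ your argument is essentially identical to the paper's: both test weak convergence against the characteristic function of $\{p<0\}$ and conclude that this set has measure zero. For the upper bound the paper takes a different route: rather than repeating the characteristic-function argument on $\{p>M\}$, it simply invokes the weak lower semicontinuity of the $L^\infty$ norm (citing Lieb--Loss) to obtain $p\le M$ a.e. Your symmetric treatment of the two inequalities is slightly more elementary and self-contained, while the paper's citation makes the upper bound a one-line appeal to a standard fact; both are perfectly valid, and your Mazur-theorem remark gives yet another equivalent viewpoint.
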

\begin{proof}
First we denote $\mathcal{A}=\{x\in \Om: p(x)<0\}$ and show
$|\mathcal{A}|=0$. Taking $q=\chi_\mathcal{A} \in L^2(\Om)$, we have
\begin{equation*}
\int_{\Om} q p_k dx\rightarrow \int_{\Om} q p dx,
\end{equation*}
as $k\rightarrow \infty$. However, the left-hand side is a
non-negative sequence of real numbers and so the right-hand side
should be non-negative which implies $|\mathcal{A}|=0$. Applying
the lower semicontinuity of the $L^\infty$ norm \cite{lieb},
yields $p(x)\leq M$ almost everywhere in $\Om$.
\end{proof}
\section{\bf {\bf \em{\bf Existence result}}}

 This section is devoted to the proof of the existence of a solution for
 problem \eqref{mmp}. To this end, we should propose some
 restriction on $p_0$ and $q_0$. First, let us introduce the new notation   $\gamma=\hbar^2/{2m}$ which will be used hereafter in this paper for simplicity. The
  condition corresponding to  $p_0$ is
\begin{equation}\label{fc}
0\leq p_0(x)<\frac{\sqrt{ \gamma C_\Omega}}{2},
\end{equation}
almost everywhere in $\Om$ such that $C_\Omega$ is the best
(largest) constant in Poincar\'{e}'s inequality.
We need condition \eqref{fc} to prove lemma \ref{levelset} which yields that the level sets of a wave
function have measure zero. This result permit us to invoke lemmas \ref{ber1} and \ref{ber2} from the rearrangement
theory.

The conditions
related to $q_0$ are rather complicated in comparison with \eqref{fc}. The function $q_0$ should be a non-negative
characteristic function such that
\begin{equation}\label{pc}
\int_{\Omega}\widetilde{p}\psi^2dx+\sqrt{(\int_{\Omega}\widetilde{p}\psi^2dx)^2+\int_{\Omega}\widetilde{q}\psi^2dx+\gamma\|\psi\|^2_{H^1_0(\Omega)}}<\sqrt{\|q_0\|_{L^\infty(\Omega)}},
\end{equation}
where $\psi$ is the eigenfunction associated with the principal
eigenvalue of the Laplacian
\begin{equation*}
-\Delta \psi=\lambda \psi\quad \mathrm{in}\quad\Omega, \quad \psi=0 \qquad
\mathrm{on}\quad\partial \Omega,
\end{equation*}
where $\|\psi\|_{L^2(\Omega)}=1$ and $\widetilde{p},
\widetilde{q}$ are the maximizers stated in lemma \ref{maxre} for $g=\psi^2$.
 We need condition \eqref{pc} in the above form  to derive a Rayleigh functional for the
 first eigenvalue, see lemma \ref{lambdalem}. In addition, this condition is necessary to ensure that the level sets
 of an eigenfunction have measure zero. Condition \eqref{pc} generates an interesting physical
 consequence. We say that the energy is confined if
\begin{equation}\label{confc}
V(\lambda,x)<\lambda^2,
\end{equation}
in a subset of $\Omega$.  See \cite{mas} for further information about physical significance of this condition.
This condition ensure that in our optimal  quantum dot the energy is confined, see section \ref{physinter}.

   Let us state the main result of this
section.
\begin{thm}\label{mainthem1}
Assume  properties \eqref{fc} and \eqref{pc} hold, then minimization
problem \eqref{mmp} is solvable. It means that there exist
$\widehat{p}\in \mathcal{P}$ and  $\widehat{q}\in \mathcal{Q}$ such
that
\begin{equation*}
\widehat{\lambda}=\lambda_{\widehat{p},\widehat{q}}= \inf_{p \in
\mathcal{P},q \in  \mathcal{Q} }\lambda_{p,q}.
\end{equation*}
\end{thm}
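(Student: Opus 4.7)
The plan is to exploit the Rayleigh-type characterization of $\lambda_{p,q}$ provided by the anticipated lemma \ref{lambdalem}: multiplying equation \eqref{mpde} by $u$, integrating by parts, and solving the resulting quadratic in $\lambda$ yields
\[
\lambda_{p,q}=\inf_{u\in H^1_0(\Omega),\ \|u\|_{L^2}=1}\mathcal{R}(p,q,u),
\]
with
\[
\mathcal{R}(p,q,u):=\int_{\Omega}pu^{2}\,dx+\sqrt{\Bigl(\int_{\Omega}pu^{2}\,dx\Bigr)^{\!2}+\int_{\Omega}qu^{2}\,dx+\gamma\|u\|_{H^1_0(\Omega)}^{2}}.
\]
This recasts the double-infimum problem \eqref{mmp} into a form amenable to the direct method coupled with the rearrangement toolkit of Section~\ref{pre}.

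I would then pick a minimizing sequence $(p_k,q_k)\subset\mathcal{P}\times\mathcal{Q}$ with $\lambda_k:=\lambda_{p_k,q_k}\searrow\widehat{\lambda}$ and corresponding normalized principal eigenfunctions $u_k\in H^1_0(\Omega)$. The pointwise bounds $0\le p_k,q_k\le h$ give boundedness in every $L^r(\Omega)$, so after extraction $p_k\rightharpoonup\widehat{p}\in\overline{\mathcal{P}}$ and $q_k\rightharpoonup\widehat{q}\in\overline{\mathcal{Q}}$ weakly in $L^2(\Omega)$, with $0\le\widehat{p},\widehat{q}\le h$ by lemma \ref{weak}. Testing with $\psi$ (the Laplace ground state) and the rearrangement maximizers $\widetilde{p},\widetilde{q}$ of lemma \ref{maxre} gives $\lambda_{p,q}\le\mathcal{R}(p,q,\psi)\le\mathcal{R}(\widetilde{p},\widetilde{q},\psi)<\sqrt{\|q_0\|_{L^\infty(\Omega)}}$ by \eqref{pc}, a uniform bound; inserted into $\mathcal{R}(p_k,q_k,u_k)=\lambda_k$ it yields a uniform $H^1_0(\Omega)$ bound on $u_k$, so further extraction produces $u_k\rightharpoonup\widehat{u}$ in $H^1_0(\Omega)$ and $u_k\to\widehat{u}$ in $L^2(\Omega)$ by Rellich--Kondrachov.

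To close the argument on $\overline{\mathcal{P}}\times\overline{\mathcal{Q}}$, I pass to the limit in $\mathcal{R}$. The $L^\infty$ bound on $p_k$ together with $u_k^2\to\widehat{u}^2$ in $L^1(\Omega)$ yields $\int_{\Omega}p_ku_k^{2}\,dx\to\int_{\Omega}\widehat{p}\widehat{u}^{2}\,dx$ by a standard weak-strong argument, and the analogous convergence holds for $q_k$. Combined with weak lower semicontinuity of $\|\cdot\|_{H^1_0(\Omega)}^{2}$ and monotonicity of the square root, this gives $\mathcal{R}(\widehat{p},\widehat{q},\widehat{u})\le\liminf_k\mathcal{R}(p_k,q_k,u_k)=\widehat{\lambda}$, whence $\lambda_{\widehat{p},\widehat{q}}\le\mathcal{R}(\widehat{p},\widehat{q},\widehat{u})\le\widehat{\lambda}$ by lemma \ref{lambdalem}.

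The last step upgrades $(\widehat{p},\widehat{q})\in\overline{\mathcal{P}}\times\overline{\mathcal{Q}}$ to a pair of genuine rearrangements. Condition \eqref{pc} forces the level sets of $\widehat{u}$ (hence of $\widehat{u}^{2}$) to have measure zero---the content of the anticipated lemma \ref{levelset}---so lemma \ref{ber1} furnishes decreasing $\eta_p,\eta_q$ with $p^{\star}:=\eta_p(\widehat{u}^{2})\in\mathcal{P}$ and $q^{\star}:=\eta_q(\widehat{u}^{2})\in\mathcal{Q}$. Lemma \ref{ber2} then gives $\int_{\Omega}p^{\star}\widehat{u}^{2}\,dx\le\int_{\Omega}\widehat{p}\widehat{u}^{2}\,dx$ and its $q$-analogue, and monotonicity of $\mathcal{R}$ in these two integrals yields $\lambda_{p^{\star},q^{\star}}\le\mathcal{R}(p^{\star},q^{\star},\widehat{u})\le\mathcal{R}(\widehat{p},\widehat{q},\widehat{u})\le\widehat{\lambda}$; the reverse inequality is automatic from the definition of $\widehat{\lambda}$, so $(p^{\star},q^{\star})$ realizes the infimum. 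I expect the main obstacles to be (i) securing the uniform $H^1_0(\Omega)$ estimate on $u_k$, where condition \eqref{pc} is essential to keep $\widehat{\lambda}$ bounded away from a coercivity threshold of \eqref{mpde}, and (ii) transferring the measure-zero level-set property from the $u_k$ to the weak limit $\widehat{u}$ so that lemmas \ref{ber1} and \ref{ber2} can be applied in the final replacement step.
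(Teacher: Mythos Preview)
Your overall strategy matches the paper's: extract weak limits $(\widehat p,\widehat q,\widehat u)$ from a minimizing sequence, show the limiting pair realizes $\widehat\lambda$ over the weak closures, then upgrade to genuine rearrangements. The paper passes to the limit in the weak PDE formulation \eqref{mpdev} rather than in the Rayleigh functional as you do; either route works and uses the same compactness ingredients.

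The point where your sketch diverges from the paper, and where the obstacle you flag as (ii) becomes a genuine missing idea, is the symmetric treatment of $\widehat p$ and $\widehat q$ in the final replacement step. Lemma~\ref{levelset}, as proved in the paper, uses in an essential way that $q$ is a \emph{characteristic} function: the estimate $\lambda^{2}-q(x)<0$ on $\mathrm{supp}(q)$ relies on $q$ taking the value $\|q_0\|_{L^\infty(\Omega)}$ there. The weak limit $\widehat q\in\overline{\mathcal Q}$ need not be a characteristic function, so you cannot invoke lemma~\ref{levelset} at the stage where $(\widehat p,\widehat q)$ are merely in the weak closures, and hence you cannot feed $\widehat u^{2}$ into lemmas~\ref{ber1}--\ref{ber2} for \emph{either} rearrangement class. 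The paper sidesteps this by treating $q$ and $p$ asymmetrically: it first replaces $\widehat q$ by a minimizer of $\int_\Omega q\,\widehat u^{2}\,dx$ over the convex set $\mathcal C=\{0\le q\le\|q_0\|_{L^\infty(\Omega)},\ \int_\Omega q\,dx=\int_\Omega q_0\,dx\}$ via the bathtub principle, which requires no level-set information and yields a characteristic function automatically lying in $\mathcal Q$. With this new $\widehat q\in\mathcal Q$ in hand, lemma~\ref{levelset} now applies (the remark after it permits $\widehat p\in\overline{\mathcal P}$), and only \emph{then} are lemmas~\ref{ber1}--\ref{ber2} used to pull $\widehat p$ back into $\mathcal P$. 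Your plan to handle both coefficients via $\eta_p,\eta_q$ would need an independent proof that the level sets of $\widehat u$ have measure zero for arbitrary $\widehat q\in\overline{\mathcal Q}$, which you have not supplied.
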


To establish the main theorem, we need  some preparation. Let us
investigate problem \eqref{mpde} more carefully. There is something
nonstandard in the equation \eqref{mpde}. The equation depends
nonlinearly upon the parameter $\lambda$. Therefore, the eigenvalues
of  \eqref{mpde} cannot be characterized by standard variational
principles like the minimax principle of Poincar\'{e}. Hence, we
should use the generalization of these standard variational
principles to achieve a variational formula representing the
eigenvalues in \eqref{mpde}. Voss \emph{et al}. \cite{vos82,vos2003}
and Turner  \cite{turner}  generalized the standard Poincar\'{e}
minimax characterization to the nonlinear eigenvalue problems with
nonlinear dependence on the eigenvalues. To derive a variational
formula, we assemble the conclusions developed in \cite{vos2003}. In
the sequel, $(\cdot, \cdot)$ denotes the inner product for
$H^1_0(\Omega)$.

Multiplying \eqref{mpde} by $\phi\in H^1_0(\Omega)$ and integrating
by parts one gets the following variational formula
\begin{equation}\label{mpdev}
\gamma\int_{\Omega}\nabla u\nabla\phi dx+ \int_{\Omega}qu\phi
dx+2\lambda\int_{\Omega}pu\phi dx=\lambda^2\int_{\Omega}u\phi dx,
\end{equation}
for every $\phi\in H^1_0(\Omega)$. Fix $u\in H^1_0(\Omega)$,
every summand in \eqref{mpdev} can be viewed as a bounded linear
functional on $H^1_0(\Omega)$. Thanks  to the Riesz representation
theorem, equation \eqref{mpdev} is equivalent to

\begin{equation}\label{mpdev2}
\lambda^2(l_1(u),\phi)-2\lambda (l_2(u),\phi)-(l_3(u),\phi)-\gamma (u,\phi)=0, \qquad
\end{equation}
for all $\phi\in H^1_0(\Omega)$. Since $u$ is arbitrary, we find
\begin{equation}\label{mpdeev}
\lambda^2l_1(u)-2\lambda l_2(u)-l_3(u)-\gamma I(u)=0, \qquad
 \end{equation}
for every $u \in H^1_0(\Omega)$
where $l_i:H^1_0(\Omega)\rightarrow H^1_0(\Omega)$, $i=1..3$, are
bounded linear operators and $I:H^1_0(\Omega)\rightarrow H^1_0(\Omega)$ is the identity operator. In
addition, all operators in \eqref{mpdeev}  are selfadjoint
operators. Hence, we can infer that equation \eqref{mpdeev} is
equivalent to the nonlinear eigenvalue problem
\begin{equation}\label{mpdeev2}
\mathcal{F}(\lambda)u=\lambda^2l_1(u)-2\lambda l_2(u)-l_3(u)-\gamma I(u)=0,
\end{equation}
where $\mathcal{F}:H^1_0(\Omega)\rightarrow H^1_0(\Omega)$, is a
family of selfadjoint and bounded operators for $\lambda\in
J=(0,\sqrt{\|q_0\|_{L^{\infty}(\Om)}})$. In view of \eqref{mpdev}
and \eqref{mpdeev2} we have
\begin{equation}\label{g}
(\mathcal{F}(\lambda)u,u)=\lambda^2\|u\|^2_{L^2(\Omega)}-2\lambda\int_{\Omega}pu^2dx-\int_{\Omega}qu^2dx-\gamma\|u\|^2_{H^1_{0}(\Omega)},
\end{equation}
is continuously differentiable, and  for every fixed $u\in
H^1_{0}(\Omega) \backslash \{0\}$ the equation
\begin{equation}\label{g=0}
(\mathcal{F}(\lambda)u,u)=0,
\end{equation}
has at most one solution in the interval $J$. Accordingly,
equation \eqref{g=0} implicitly defines a functional $\mathcal{R}$ on some
subset $\mathcal{D}$ of $H^1_{0}(\Omega)\backslash\{0\}$ which is
called the Rayleigh functional. The Rayleigh functional $\mathcal{R}$ is
calculated as
\begin{equation}\label{ray formula}
\mathcal{R}(u)=\frac{
\int_{\Omega}pu^2dx+\sqrt{(\int_{\Omega}pu^2dx)^2+(\int_{\Omega}qu^2dx+\gamma\|u\|^2_{H^1_0(\Omega)})\|u\|^2_{L^2(\Om)}
}  }{\|u\|^2_{L^2(\Om)}},
\end{equation}
for every $u$ in $\mathcal{D}$. We should insist that the set
$\mathcal{D}$ is not empty. In view of lemma \ref{maxre} and condition \eqref{pc}, we have
\begin{equation*}
\mathcal{R}(\psi)\leq \int_{\Omega}\widetilde{p}\psi^2dx+\sqrt{(\int_{\Omega}\widetilde{p}\psi^2dx)^2+\int_{\Omega}\widetilde{q}\psi^2dx+\gamma\|\psi\|^2_{H^1_0(\Omega)}}<\sqrt{\|q_0\|_{L^\infty(\Omega)}},
\end{equation*}
which means that $\mathcal{R}(\psi)$ belongs to $J$ and then $\mathcal{D}$ is not empty.

 Let us note that the Rayleigh  functional
is a generalization of the Rayleigh quotient in the theory of linear
eigenvalue problems. Since the Rayleigh functional $\mathcal{R}$ is not
defined on the entire space $H^1_{0}(\Omega)\backslash\{0\}$, then
the eigenproblem \eqref{mpdeev2} is called nonoverdamped. Werner and
Voss, \cite{vos82}, studied the general nonoverdamped case and
proved a minmax principle generalizing the characterization of
Poincar\'{e}.

 Additionally, it can be verified that
\begin{equation}\label{dg>0}
\frac{\partial}{\partial
\lambda}(\mathcal{F}(\lambda)u,u)\mid_{\lambda=\mathcal{R}(u)}=2(\lambda\|u\|^2_{L^2(\Om)}- \int_{\Omega}pu^2dx  )>0,\qquad
\forall u\in \mathcal{D},
\end{equation}
applying \eqref{ray formula}. It remains to examine the existence
of  a function $\zeta(\lambda)>0$ such that the linear operator
$\mathcal{F}(\lambda)+\zeta(\lambda)I$ for every $\lambda \in J$
is
 completely continuous. To achieve  this aim, if we let
$\zeta(\lambda)=\gamma$ then in view of \eqref{mpdeev}, it suffices
to show that $l_i$, $i=1..3$, are completely continuous. We only
derive the last assertion for the case $l_2$. Other cases can be
proved in the same way and the proofs are omitted.

 Recall from \eqref{mpdev} and \eqref{mpdev2} that
 \begin{equation*}
 (l_2(u),l_2(u))=\int_{\Omega}p u l_2(u)dx.
 \end{equation*}
   Consider a weak convergent sequence $\{u_k\}_1^\infty$ where
$u_k\rightharpoonup u$ in $H^1_0(\Omega)$. Then the  compact
embedding of $H^1_0(\Omega)$ into $L^2(\Omega)$, see
\cite{gilbarg}, implies that $\{u_k\}_1^\infty$ converges strongly
to $u$ in $L^2(\Omega)$. On the other hand, using Poincar\'{e}'s
inequality we have
\begin{eqnarray*}
\lim_{k\rightarrow \infty}\|l_2(u_k)-l_2(u)\|^2_{H^1_0(\Omega)}
=\lim_{k\rightarrow \infty}\int_\Omega p(u_k-u)l_2(u_k-u) dx \\
\leq (\frac{1}{C_\Om})\lim_{k\rightarrow
\infty}\|p_0\|_{L^\infty(\Omega)}\|u_k-u\|_{L^2(\Omega)}\|l_2(u_k)-l_2(u)\|_{H^1_0(\Omega)},
\end{eqnarray*}
where this yields  $l_2(u_k)$ converges strongly to $l_2(u)$ in
$H^1_0(\Omega)$. This implies that $l_2$ is completely continuous.
Using the above discussions, it can be said that all assumptions of
theorem $1$ in \cite{vos2003} are satisfied and the following lemma can
be deduced.
\begin{lem}\label{lambdalem}
Suppose  \eqref{pc} holds, then the first eigenvalue of
equation \eqref{mpde} has variational formulation
\begin{equation}\label{lambda}
\lambda= \underset{\| u\|_{L^2(\Omega)}=1}{\underset{u\in
H_{0}^{1}(\Omega)}{\min}}\int_{\Omega}pu^2dx+\sqrt{(\int_{\Omega}pu^2dx)^2+\int_{\Omega}qu^2dx+\gamma\|u\|^2_{H^1_0(\Omega)}}.
\end{equation}

\end{lem}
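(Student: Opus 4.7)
The proof will combine the abundant preparation leading up to the statement with Theorem~1 of Voss~\cite{vos2003}. My first task is to check that every hypothesis of that nonoverdamped minmax theorem is in place. Items already assembled in the exposition are: $\mathcal{F}(\lambda)$ is a family of bounded selfadjoint operators on $H^1_0(\Omega)$ for $\lambda\in J$; $(\mathcal{F}(\lambda)u,u)$ is continuously differentiable in $\lambda$; for fixed nonzero $u$, equation \eqref{g=0} has at most one root in $J$; the monotonicity \eqref{dg>0} holds at $\lambda=\mathcal{R}(u)$; the Rayleigh domain $\mathcal{D}$ is nonempty since Lemma~\ref{maxre} together with condition~\eqref{pc} produces the Laplace eigenfunction $\psi\in\mathcal{D}$; and $\mathcal{F}(\lambda)+\gamma I$ is completely continuous in view of the complete continuity of the $l_i$. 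The exposition spelled out the argument for $l_2$; I would reproduce the same computation for $l_1$ and $l_3$, using the compact embedding $H^1_0(\Omega)\hookrightarrow L^2(\Omega)$ together with Poincaré's inequality.

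With the hypotheses checked, invoking Voss's minmax principle yields the variational identification of the smallest eigenvalue of \eqref{mpdeev2} lying in $J$:
\begin{equation*}
\lambda \;=\; \min_{u\in\mathcal{D}}\mathcal{R}(u).
\end{equation*}
Substituting the explicit formula \eqref{ray formula} and exploiting the $0$-homogeneity of $\mathcal{R}$ to restrict to the $L^2$-unit sphere produces exactly the right-hand side of \eqref{lambda}.

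It remains to replace $\mathcal{D}$ with all of $H^1_0(\Omega)$ in the minimization. The closed-form expression on the right of \eqref{lambda} is well-defined for every $u\in H^1_0(\Omega)\setminus\{0\}$, and an element of the unit $L^2$-sphere lies in $\mathcal{D}$ precisely when this value is less than $\sqrt{\|q_0\|_{L^\infty(\Omega)}}$. Since condition~\eqref{pc} guarantees $\mathcal{R}(\psi)<\sqrt{\|q_0\|_{L^\infty(\Omega)}}$, any candidate minimizer outside $\mathcal{D}$ is strictly beaten by $\psi$ and may be discarded; hence the constrained infimum over the unit sphere coincides with $\min_{u\in\mathcal{D}}\mathcal{R}(u)$.

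The main obstacle I anticipate is the step of citing Voss's theorem cleanly: one must verify that the enumeration of eigenvalues produced by his nonoverdamped minmax principle begins at the principal eigenvalue rather than at a higher member of the spectrum. This requires a careful reading of~\cite{vos2003} to match the present sign conventions (in particular, the factor $-\gamma I$ appearing in \eqref{mpdeev2}) with the ordering convention used there, and to confirm that the lower endpoint of $J$ is admissible as the bottom of the search interval.
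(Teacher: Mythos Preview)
Your proposal is correct and follows essentially the same route as the paper: the paper's proof consists precisely of the preparatory verifications you enumerate and then the one-line invocation of Theorem~1 in \cite{vos2003}. You are in fact more careful than the paper, which does not explicitly justify the passage from the minimum over $\mathcal{D}$ to the minimum over the full $L^2$-unit sphere nor flag the issue of matching conventions with \cite{vos2003}; both of your added remarks are apposite.
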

In the two following lemmas, we examine the eigenfunction of
\eqref{mpde}.
\begin{lem}\label{eigenfunction}
Let $u$ be an eigenfunction corresponding to the first eigenvalue
of \eqref{mpde} then\\
i) $u \in H^2(\Omega)\cap C^{1,\delta}(\Omega)\cap
C(\overline{\Omega})$ for some $\delta \in (0,1)$,\\
ii) $u>0$ in $\Omega$, \\
iii) $u$ is unique up to a constant factor.
\end{lem}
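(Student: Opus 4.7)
The plan is to prove the three claims in order, with part (i) resting on an elliptic bootstrap, part (ii) exploiting the fact that the Rayleigh functional depends on $u$ only through $u^2$, and part (iii) on a comparison argument of the form $u_1-t^\star u_2$. For \textbf{(i)}, I would rewrite \eqref{mpde} as $-\gamma\Delta u = V(x)u$ with $V:=\lambda^2-q-2\lambda p \in L^\infty(\Omega)$, since $p,q$ are bounded measurable and $\lambda\in J$ is a fixed scalar. Because $u\in H^1_0(\Omega)\subset L^2(\Omega)$, the right-hand side is in $L^2(\Omega)$, so standard global $H^2$-regularity on the smooth domain $\Omega$ with zero Dirichlet data gives $u\in H^2(\Omega)$. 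A Moser/Stampacchia $L^\infty$ estimate (or direct iteration) then yields $u\in L^\infty(\Omega)$, hence $Vu\in L^\infty(\Omega)$, and the $W^{2,r}$-theory lifts this to $u\in W^{2,r}(\Omega)$ for every $r<\infty$. Sobolev embedding then delivers $u\in C^{1,\delta}(\Omega)\cap C(\overline{\Omega})$ for some $\delta\in(0,1)$.

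For \textbf{(ii)}, the key observation is that the Rayleigh functional in \eqref{ray formula} depends on $u$ only through $u^2$ and $|\nabla u|^2$. Using $|\nabla|u||=|\nabla u|$ a.e., one obtains $\mathcal{R}(|u|)=\mathcal{R}(u)=\lambda$, so $|u|$ is again a minimizer in \eqref{lambda} and therefore an eigenfunction, i.e.\ a non-negative weak solution of \eqref{mpde}. Write this as $-\gamma\Delta|u|+c(x)|u|=0$ with $c:=q+2\lambda p-\lambda^2\in L^\infty(\Omega)$. Since $c$ is bounded, the Harnack-based strong maximum principle for non-negative solutions (or, equivalently, the shifted operator $-\gamma\Delta+(K+c)$ with $K$ chosen so that $K+c\geq 0$) forces $|u|>0$ throughout the connected open set $\Omega$. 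As $u\in C(\overline{\Omega})$ and $u=\pm|u|$, $u$ cannot change sign, and after replacing $u$ by $-u$ if necessary we obtain $u>0$ in $\Omega$.

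For \textbf{(iii)}, let $u_1,u_2$ be two eigenfunctions for the same first eigenvalue $\lambda$. By part (ii) both are positive in $\Omega$, and both satisfy the linear equation $M_\lambda v:=-\gamma\Delta v+(q+2\lambda p-\lambda^2)v=0$ with $v=0$ on $\partial\Omega$. The Hopf boundary lemma applied to each $u_i$ (again after a constant shift to render the zeroth-order coefficient non-negative) gives $\partial u_i/\partial\nu<0$ on $\partial\Omega$, so the quotient $u_1/u_2$ extends continuously and strictly positively to $\overline{\Omega}$, and $t^\star:=\min_{\overline{\Omega}}u_1/u_2>0$ is attained. Set $w:=u_1-t^\star u_2\ge 0$; then $M_\lambda w=0$ and $w$ vanishes somewhere in $\overline{\Omega}$. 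If the zero is interior, the strong maximum principle forces $w\equiv 0$; if it lies only on $\partial\Omega$, then $\partial w/\partial\nu$ must also vanish there (by definition of $t^\star$), and the Hopf lemma again forces $w\equiv 0$. In either case $u_1=t^\star u_2$, proving simplicity. The main obstacle throughout parts (ii) and (iii) is that the zeroth-order coefficient $q+2\lambda p-\lambda^2$ is not necessarily non-negative, so textbook maximum-principle statements do not apply verbatim; this is precisely what the shifting trick $-\gamma\Delta w+(K+c)w=Kw\ge 0$, or equivalently the Harnack-based formulation for non-negative solutions with $L^\infty$ potential, is designed to handle.
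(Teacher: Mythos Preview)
Your argument is correct. For (i) and (ii) you follow the paper's route exactly: rewrite \eqref{mpde} as $-\gamma\Delta u + v(x)u=0$ with $v\in L^\infty$ and invoke standard elliptic regularity; then observe $\mathcal{R}(|u|)=\mathcal{R}(u)$ so that $|u|$ is again an eigenfunction, and apply Harnack to obtain strict positivity.

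For (iii) the approaches diverge. You run the Hopf/quotient argument: use the boundary normal derivatives to extend $u_1/u_2$ continuously and positively to $\overline{\Omega}$, set $t^\star=\min u_1/u_2$, and apply the strong maximum principle and Hopf lemma (after the shift $c\mapsto c+K$) to force $u_1-t^\star u_2\equiv 0$. The paper instead uses the shorter integral trick: choose $\alpha$ so that $\int_\Omega(u-\alpha\widetilde{u})\,dx=0$; since the PDE is linear in $u$ for fixed $\lambda$, the difference $u-\alpha\widetilde{u}$ is again a solution for the first eigenvalue, and by (ii) any nonzero such solution has constant sign, hence nonzero integral---so $u\equiv\alpha\widetilde{u}$. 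The paper's version is more economical and needs nothing beyond part (ii) itself, whereas yours requires $C^1$ regularity up to $\partial\Omega$ (which your $W^{2,r}$ bootstrap does supply, though you stated only interior $C^{1,\delta}$) together with the Hopf lemma. In return, your argument is independent of any variational characterization of $\lambda$ and works verbatim for positive solutions of any linear second-order elliptic equation with bounded coefficients.
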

\begin{proof} $\mathrm{i)}$ Equation \eqref{mpde} can be considered as
\begin{equation*}
-\gamma\Delta u +v(x)u=0 ,\quad in\quad\Omega, \quad u=0, \qquad
\mathrm{on}\quad\partial \Omega,
\end{equation*}
where $v(x)=q(x)+ 2\lambda p(x) - \lambda ^2 $. By standard
regularity results for linear elliptic partial differential
equations, see \cite{gilbarg}, the first assertion is obtained.\\
$\mathrm{ii)}$ In view of \eqref{lambda}, we can regard $|u|$ as
 an eigenfunction. Applying Harnack's inequality
\cite{gilbarg}, leads us to the fact
that eigenfunctions associated with $\lambda$ have constant sign.\\
$\mathrm{iii)}$ Let $\widetilde{u}$ be an eigenfunction of
\eqref{mpde} corresponding to $\lambda$. According to part
$(\mathrm{ii})$, we have $\int_{\Omega} \widetilde{u} dx>0$ and so
there exists a real constant $\alpha$ such that
$\int_{\Omega}u-\alpha \widetilde{u} dx= 0$. But since $u-\alpha
\widetilde{u}$ is also a solution of \eqref{mpde} associated with the
first eigenvalue $\lambda$ and $\int_{\Omega}u-\alpha
\widetilde{u} dx= 0$, one  can arrive at $u\equiv \alpha
\widetilde{u}$.

\end{proof}

\begin{lem}\label{levelset}
Suppose relations \eqref{fc} and \eqref{pc} hold and $u$ is a
solution of \eqref{lambda}  normalized as
$\|u\|_{L^2(\Omega)}=1$. The level sets of $u$  have
 measure zero.
\end{lem}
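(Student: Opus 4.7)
The plan is to argue by contradiction. Since $u>0$ in $\Omega$ by Lemma \ref{eigenfunction}(ii), it is enough to show $|E_c|=0$ for every $c>0$, where $E_c:=\{x\in\Omega:u(x)=c\}$; assume to the contrary that $|E_c|>0$ for some such $c$. The first step is to promote the identity $\nabla u=0$ a.e.\ on $E_c$ (which is Stampacchia's theorem applied to $u\in H^1_0(\Omega)$, see \cite{gilbarg}) into $\Delta u=0$ a.e.\ on $E_c$. Since $u\in H^2(\Omega)$ by Lemma \ref{eigenfunction}(i), every partial derivative $\partial_i u$ lies in $H^1(\Omega)$; applying Stampacchia to each $\partial_i u$ and using that $E_c\subseteq\{\partial_i u=0\}$ up to a null set, one obtains $\nabla(\partial_i u)=0$ a.e.\ on $E_c$. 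Summing the diagonal entries yields $\Delta u=0$ a.e.\ on $E_c$.

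Inserting $u=c$ and $\Delta u=0$ into the pointwise a.e.\ form of \eqref{mpde} and dividing by $c>0$ gives
\begin{equation*}
q(x)+2\lambda\,p(x)=\lambda^2\qquad\text{for a.e.\ }x\in E_c.
\end{equation*}
Since $q_0$ is a characteristic function, any rearrangement $q\in\mathcal{Q}$ takes only the two values $0$ and $K:=\|q_0\|_{L^\infty(\Omega)}$, so $E_c$ splits, up to a null set, into $E_c^{K}:=E_c\cap\{q=K\}$ and $E_c^{0}:=E_c\cap\{q=0\}$, and I will show that each piece has measure zero.

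On $E_c^{K}$ the identity reduces to $2\lambda p=\lambda^2-K$. Testing the variational formula \eqref{lambda} with $\psi$ (which satisfies $\|\psi\|_{L^2(\Omega)}=1$) and combining Lemma \ref{maxre} with the monotonicity of the functional inside the square root gives $\lambda\leq\int_\Omega\widetilde{p}\psi^2dx+\sqrt{(\int_\Omega\widetilde{p}\psi^2dx)^2+\int_\Omega\widetilde{q}\psi^2dx+\gamma\|\psi\|^2_{H^1_0(\Omega)}}<\sqrt{K}$ by \eqref{pc}, so $\lambda^2-K<0$ forces $p<0$ on $E_c^{K}$, contradicting $p\geq 0$ unless $|E_c^{K}|=0$. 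On $E_c^{0}$ the identity becomes $p=\lambda/2$. From \eqref{lambda} together with $\|u\|_{L^2(\Omega)}=1$ and Poincar\'{e}'s inequality $\|u\|^2_{H^1_0(\Omega)}\geq C_\Omega$, one gets $\lambda\geq\sqrt{\gamma C_\Omega}$, so $p\geq\sqrt{\gamma C_\Omega}/2$ a.e.\ on $E_c^{0}$, contradicting \eqref{fc} unless $|E_c^{0}|=0$. Hence $|E_c|=0$, completing the proof. The main technical obstacle is the upgrade from $\nabla u=0$ to $\Delta u=0$ on $E_c$, which hinges on the full $H^2$-regularity from Lemma \ref{eigenfunction}(i) and an iterated application of Stampacchia's theorem; once this is available, hypotheses \eqref{fc} and \eqref{pc} are precisely tuned to rule out the two algebraic alternatives.
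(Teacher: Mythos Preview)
Your proof is correct and follows essentially the same route as the paper: both hinge on the pointwise identity $-\gamma\Delta u=(\lambda^2-2\lambda p-q)u$ together with the bounds $\lambda<\sqrt{\|q_0\|_{L^\infty}}$ from \eqref{pc} and $\lambda\geq\sqrt{\gamma C_\Omega}$ from \eqref{lambda} plus Poincar\'e, which force $\lambda^2-2\lambda p-q\neq 0$ a.e. The only difference is organizational: the paper argues directly that $\Delta u\neq 0$ a.e.\ and then invokes Lemma~7.7 of \cite{gilbarg}, whereas you run the contrapositive, making the iterated Stampacchia step ($u\in H^2\Rightarrow \Delta u=0$ a.e.\ on $E_c$) explicit; the case split is the same since $q$ takes only the two values $0$ and $\|q_0\|_{L^\infty}$.
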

\begin{proof}
Employing lemma \ref{eigenfunction}, we know that $u$ is a positive
function that satisfies
\begin{equation}\label{quadlambda}
-\gamma\Delta u= (\lambda^2-2\lambda p-q)u,
\end{equation}
almost everywhere in $\Omega$.\\
Let us define $\mathcal{A}=\mathrm{\mathrm{supp}}(p)$ and
$\mathcal{B}=\mathrm{\mathrm{supp}}(q)$.  We will show that the
right hand side of \eqref{quadlambda} is never zero in  $\Om$. This
is a clear conclusion in $(\mathcal{A}\cup\mathcal{B})^c$.

 We now claim that $\lambda^2-2\lambda p(x)>0$ in $\mathcal{A}-\mathcal{B}$.
From \eqref{lambda} we deduce $\lambda>\sqrt{\gamma
}\|u\|_{H^1_0(\Om)}$. Since $\psi$ is   the
normalized eigenfunction corresponding to the principal eigenvalue of the Laplacian
with Dirichlet's boundary condition, we have $\|u\|_{H^1_0(\Om)}\geq \|\psi\|_{H^1_0(\Om)}$ and so  $\lambda>\sqrt{\gamma
}\|\psi\|_{H^1_0(\Om)}=\sqrt{\gamma C_{\Om}}$. This yields that $\lambda^2-2\lambda p(x)>0$
because of \eqref{fc}.

 We know that $q(x)$ is a characteristic function due
to \eqref{rea}. Recall from lemma \ref{ber3} that $\|q\|_{L^\infty(\Om)}=\|q_0\|_{L^\infty(\Om)}$. Invoking \eqref{pc}, \eqref{lambda} and lemma \ref{maxre}, we see
\begin{align}
\lambda&\leq&
\int_{\Omega}pu^2dx+\sqrt{(\int_{\Omega}pu^2dx)^2+\int_{\Omega}qu^2dx+\gamma\|u\|^2_{H^1_0(\Omega)}}\\
&\leq&
\int_{\Omega}\widetilde{p}\psi^2dx+\sqrt{(\int_{\Omega}\widetilde{p}\psi^2dx)^2+\int_{\Omega}\widetilde{q}\psi^2dx+\gamma\|\psi\|^2_{H^1_0(\Omega)}}\leq
\sqrt{\|q_0\|_{L^\infty(\Om)}},
\end{align}
which implies  $\lambda^2-q(x)<0$ in $\mathcal{B}$. In summary,
$-\Delta u\neq 0$ almost everywhere in $\Om$. Consequently, the level sets of $u$  have
 measure zero applying lemma 7.7 of
\cite{gilbarg}.
\end{proof}

\begin{rem}
Using lemma \ref{weak}, it is a straightforward conclusion that
lemmas \ref{lambdalem} and  \ref{eigenfunction} are valid when
$q\in {\overline{\mathcal{Q}}}, p\in {\overline{\mathcal{P}}}$
where $ {\overline{\mathcal{P}}}$ and $ {\overline{\mathcal{Q}}}$
are the weak closure of $\mathcal{Q}$ and $\mathcal{P}$ in
$L^2(\Om)$ respectively. In addition, lemma \ref{levelset} is
valid for $p\in {\overline{\mathcal{P}}}$.

\end{rem}

Now we are ready to state the proof of theorem \ref{mainthem1}
\begin{proof}
There exists a real number $\widehat{\lambda}$ and minimizing
sequence $\{p_k\}^\infty_1$ and $\{q_k\}^\infty_1$ such that
\begin{eqnarray*}
\widehat{\lambda}=\inf_{p \in \mathcal{P},q \in  \mathcal{Q}
}\lambda_{p,q}&=& \lim_{k\rightarrow\infty}\lambda_{p_k,q_k} =\\
\lim_{k\rightarrow\infty}
\int_{\Omega}p_ku_k^2dx&+&\sqrt{(\int_{\Omega}p_ku_k^2dx)^2+\int_{\Omega}q_ku_k^2dx+\gamma\|u_k\|^2_{H^1_0(\Omega)}},
\end{eqnarray*}
where $u_k$ is the positive eigenfunction corresponding to
$\lambda_{p_k,q_k}$ normalized such that $\| u_k\|_{L^2(\Omega)}=1$.
Employing lemma \ref{ber3}, we see that the sequences $\{p_k\}^\infty_1$
and $\{q_k\}^\infty_1$ are bounded in $L^\infty(\Omega)$. Hence
there are subsequences (still denoted by $\{p_k\}^\infty_1$ and
$\{q_k\}^\infty_1$) converging  to $\widehat{p}$ and $\widehat{q}$ in
$L^\infty(\Omega)$ with respect to the weak star topology. Moreover,
$\{u_k\}^\infty_1$ is a bounded sequence in $H^1_0(\Omega)$ and
there is a subsequence (still denoted by $\{u_k\}^\infty_1$)
converging weakly to $\widehat{u}$ in $H^1_0(\Omega)$. The
compact embedding of $H^1_0(\Omega)$ into $L^2(\Omega)$ (see
\cite{gilbarg}) yields that $\{u_k\}^\infty_1$ converges strongly to
$\widehat{u}$ in $L^2(\Omega)$. In summary, we have
\begin{equation}\label{pwcon}
p_k\rightharpoonup \widehat{p},\qquad q_k\rightharpoonup
\widehat{q} \quad  \mathrm{in}\quad L^\infty(\Omega),
\end{equation}

\begin{equation}\label{uwcon}
u_k\rightharpoonup \widehat{u},\quad \mathrm{in}\quad H^1_0(\Omega),
\qquad u_k\rightarrow \widehat{u} \quad \mathrm{in}\quad
L^2(\Omega).
\end{equation}
On the other hand, for all $\phi \in H^1_0(\Omega)$ we have
\begin{equation}\label{integralcon}
\int_{\Omega} p_k u_k \phi dx \rightarrow \int_{\Omega} \widehat{p}
\widehat{u} \phi dx, \qquad \int_{\Omega} q_k u_k \phi dx
\rightarrow \int_{\Omega} \widehat{q} \widehat{u} \phi dx,
\end{equation}
since, for instance,  applying lemma \ref{ber3} we see that

\begin{eqnarray*}
\lim_{k\rightarrow \infty}\left| \int_{\Omega} p_k u_k \phi dx-
\widehat{p}\widehat{u} \phi dx \right|&=& \lim_{k\rightarrow
\infty}\left|\int_{\Omega} p_k u_k \phi - p_k \widehat{u }\phi+  p_k
\widehat{u }\phi- \widehat{p}\widehat{u} \phi dx\right|\\ &\leq&
\lim_{k\rightarrow \infty}\|p_0\|_{L^\infty(\Omega)}\int_{\Omega}
\left|(u_k-\widehat{u}) \phi \right|dx+\lim_{k\rightarrow
\infty}\left|\int_{\Omega}\widehat{u}\phi(p_k-\widehat{p})dx\right|,
\end{eqnarray*}
where the right hand side converges to zero, applying (\ref{pwcon})
and (\ref{uwcon}). At last, by means of the continuity of the
integral $\int_{\Omega}\nabla u \nabla\phi$ with respect to $u$ in
$H^1_0(\Omega)$ together with (\ref{integralcon}) we have
\begin{equation*}
\gamma\int_{\Omega}\nabla\widehat{u}\nabla\phi dx+
\int_{\Omega}\widehat{q}\widehat{u}\phi
dx+2\widehat{\lambda}\int_{\Omega}\widehat{p}\widehat{u}\phi
dx=\widehat{\lambda} ^2\int_{\Omega}\widehat{u}\phi dx,
\end{equation*}
for every $\phi$ which belongs to $H^1_0(\Omega)$. Therefore,
$\widehat{\lambda}$ is an eigenvalue of (\ref{mpde}) with
$\widehat{u}$ as its associated eigenfunction corresponding to
$\widehat{p}$ and $\widehat{q}$. In other words,
\begin{eqnarray}\label{lambdahat}
\widehat{\lambda}=\inf_{p \in \mathcal{P},q \in  \mathcal{Q}
}\lambda_{p,q}=\lambda_{\widehat{p},\widehat{q}},
\end{eqnarray}

 It remains to  show that $\widehat{p}\in \mathcal{P}$ and
$\widehat{q}\in  \mathcal{Q}$. Consider the set
\begin{eqnarray*}
\mathcal{C}=\left\{q\in L^2(\Om):\: 0\leq q \leq
\|q_0\|_{L^\infty(\Om)},\, \int_\Om qdx=\int_\Om q_0dx\right\}.
\end{eqnarray*}
Observe that $\widehat{q}\in \mathcal{C}$ in view of lemmas
\ref{ber3} and \ref{weak}.
Consider the following minimization problem
\begin{eqnarray}\label{qmin}
\inf_{q \in  \mathcal{C} }\int q \widehat{u}^2 dx.
\end{eqnarray}
Employing the bathtub principle \cite{lieb}, we can find that there is a characteristic
function in $\mathcal{C}$ where it is a solution of \eqref{qmin}. The minimizer which still denoted $\widehat{q}$
has the form $\widehat{q}=\beta \chi_{\mathcal{D}_q}$ so that
$\beta=\|q_0\|_{L^\infty(\Omega)}$  and
$|\mathcal{D}_q|=|\mathrm{\mathrm{supp}}(q_0)|$.
 Invoking formula
\eqref{lambda}, we see that the new $\widehat{q}$ satisfies equations
\eqref{lambdahat} and $\widehat{q} \in \mathcal{Q}$ based upon
\eqref{rea}.

Next we assert that $\widehat{p}\in \mathcal{P}$. Utilizing lemmas
\ref{eigenfunction} and \ref{levelset}, we can see that
$\widehat{u}^2$ has all level sets with measure zero and
$\widehat{u}^2\in L^1(\Omega)$. Remembering lemmas \ref{ber1} and
\ref{ber2}, there exists a decreasing function $\eta_1$ where
\begin{equation}\label{eta1}
\int_{\Omega} p \widehat{u}^2 dx \geq \int_{\Omega}
\eta_1(\widehat{u}^2)\widehat{u}^2 dx~~~~~~~\qquad\qquad \forall ~p
\in\overline{\mathcal{P}}.
\end{equation}
Note that $\eta_1(\widehat{u}^2)\in \mathcal{P} $ and it is the
unique minimizer in the above inequality. Using relations
\eqref{lambda} and \eqref{eta1}, we obtain
\begin{eqnarray*}
\widehat{\lambda}&\geq&
\int_{\Omega}\eta_1(\widehat{u}^2)\widehat{u}^2dx+\sqrt{(\int_{\Omega}\eta_1(\widehat{u}^2)\widehat{u}^2dx)^2+\int_{\Omega}\widehat{q}\widehat{u}^2dx+\gamma\|\widehat{u}\|^2_{H^1_0(\Omega)}}\\
&\geq& \widehat{\lambda},
\end{eqnarray*}
where by the uniqueness of the minimizer stated above we have
\begin{eqnarray*}
\widehat{p}=\eta_1(\widehat{u}^2),
\end{eqnarray*}
 which asserts that $\widehat{p}$ is in $\mathcal{P}$.

\end{proof}
%
%

\section{\bf {\bf \em{\bf Uniqueness result and shape configuration}}}
  From the  physical point of view, it is important   to know the uniqueness of functions $\widehat{p}(x)$ and $\widehat{q}(x)$  and the
 shape of the potential function $\widehat{V}=\widehat{q}(x)+2\widehat{\lambda} \widehat{p}(x)$. Such questions have been addressed in
 \cite{henrot,chanillo,abbasali2}.   We assume, hereafter, that $p_0$ is a characteristic function and  $\Om$  is a ball centered at the
 origin. Let us recall that $q_0$ is a characteristic function as
 well. For characteristic functions $p_0$ and $q_0$, we can determine the optimizers $\widehat{p}$ and $\widehat{q}$ exactly.

Let us recall some important functions which belong to the
rearrangement class in the case where $\Omega$ is a ball centered at the
origin. Assume $f: \Omega\rightarrow \mathbb{R}$  is a Lebesgue
measurable function then we denote by $f^*: \Omega\rightarrow
\mathbb{R}$ and $f_*: \Omega\rightarrow \mathbb{R}$ the Schwarz
decreasing and increasing rearrangements of $f$ respectively. It
means that $f^*$ and $f_*$ are rearrangements of $f$ such that
$f^*$ is a radial decreasing function, whereas $f_*$ is a radial
increasing function \cite{henrot,hardy}. Next we state some well known rearrangement
inequalities.
\begin{lem}\label{hardy} Suppose $\Omega$ is a ball centered at the origin in $\mathbb{R}^n$. Then
\begin{equation*}
\int_{\Omega} f^*g_* dx\leq\int_{\Omega}fg dx\leq \int_{\Omega}
f^*g^* dx,
\end{equation*}
where $f$ and $g$ are non-negative measurable functions.
\end{lem}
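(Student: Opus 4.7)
The plan is to prove the upper bound by the classical Hardy--Littlewood layer-cake argument, and then to deduce the lower bound via a truncation trick that reduces it to the upper bound applied to $f$ and $M-g$.

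For the upper bound $\int_{\Omega} fg\,dx \leq \int_{\Omega} f^* g^*\,dx$, I would begin with the layer-cake representation
\begin{equation*}
f(x) = \int_0^\infty \chi_{\{f>s\}}(x)\,ds, \qquad g(x) = \int_0^\infty \chi_{\{g>t\}}(x)\,dt,
\end{equation*}
multiply and integrate, applying Fubini, to get
\begin{equation*}
\int_{\Omega} fg\,dx = \int_0^\infty \int_0^\infty |\{f>s\}\cap \{g>t\}|\,ds\,dt.
\end{equation*}
Next, use the trivial bound $|A\cap B|\leq \min(|A|,|B|)$. Because $\Omega$ is a ball centered at the origin and $f^{*}$, $g^{*}$ are Schwarz decreasing rearrangements (radial and decreasing), the superlevel sets $\{f^*>s\}$ and $\{g^*>t\}$ are concentric balls in $\Omega$. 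Therefore
\begin{equation*}
|\{f^*>s\}\cap\{g^*>t\}| = \min(|\{f^*>s\}|,|\{g^*>t\}|) = \min(|\{f>s\}|,|\{g>t\}|),
\end{equation*}
the last equality following from equimeasurability. Carrying out the same double layer-cake integration for $f^*g^*$ gives equality with the minimum, and combining with the previous bound yields the desired upper inequality.

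For the lower bound $\int_{\Omega} f^* g_*\,dx \leq \int_{\Omega} fg\,dx$, I would first assume $g\in L^\infty(\Omega)$ with $M=\|g\|_{L^\infty(\Omega)}$, so $M-g\geq 0$. The key observation is $(M-g)^* = M - g_*$: adding a constant shifts the rearrangement, and negation swaps increasing and decreasing rearrangements. Applying the already-proved upper inequality to the pair $f$, $M-g$ gives
\begin{equation*}
\int_{\Omega} f(M-g)\,dx \leq \int_{\Omega} f^*(M-g_*)\,dx.
\end{equation*}
Since $\Omega$ has finite measure and $\int_{\Omega} f\,dx = \int_{\Omega} f^*\,dx$ by equimeasurability, the terms $M\int_{\Omega} f\,dx$ cancel, leaving the claim. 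For general non-negative $g$, I would truncate with $g_n=\min(g,n)$, apply the bounded case, and pass to the limit by monotone convergence, noting that $(g_n)_*\uparrow g_*$.

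The only genuine subtlety is justifying the identity $(M-g)^* = M - g_*$ and the convergence $(g_n)_*\uparrow g_*$ cleanly from the definition of Schwarz rearrangement via distribution functions; both reduce to a straightforward bookkeeping exercise with level sets. In the intended applications within this paper, the functions involved ($p,q$ and $\widehat{u}^2$) are bounded thanks to \eqref{fc}, \eqref{pc} and the regularity in Lemma \ref{eigenfunction}, so the bounded case is sufficient and no limiting argument is actually needed.
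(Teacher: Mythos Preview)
Your argument is correct: the layer-cake computation for the upper bound is the standard Hardy--Littlewood proof, and the complementation trick $(M-g)^*=M-g_*$ followed by truncation is the usual way to derive the lower bound from it. One small technicality: in the cancellation step you need $\int_\Omega f\,dx<\infty$, not merely $g\in L^\infty(\Omega)$; this is easily handled by also truncating $f$, and in any case (as you observe) all functions to which the lemma is applied in this paper are bounded on the finite-measure domain $\Omega$, so the issue does not arise.

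The paper, however, does not prove this lemma at all: its proof consists of the single line ``See \cite{hardy}'', i.e.\ a citation of Hardy, Littlewood and P\'olya's \emph{Inequalities}. So there is no approach to compare against; your write-up supplies a self-contained proof where the paper simply invokes the classical source. What you gain is independence from the reference and a proof tailored to the Schwarz (radial) rearrangement setting used here; what the paper gains is brevity, since the inequality is entirely standard.
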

\begin{proof}
See \cite{hardy}.
\end{proof}
\begin{lem}\label{zeimer} Suppose $\Omega$ is a ball centered at the origin in
$\mathbb{R}^n$ and $r>1$. Consider a non-negative function $u \in
W_0^{1,r}(\Omega)$ then \\
$i)$ $u^* \in W_0^{1,r}(\Omega) $ and
\begin{equation*}
\int_{\Omega}|\nabla u|^r dx \geq \int_{\Omega}| \nabla u^*|^r dx,
\end{equation*}
$ii)$ If in the last inequality, the equality holds, and the set
$\{x \in \Omega: \nabla u^*(x)=0,\quad0<u^*(x)<M\}$,
$M=\mathrm{ess} \underset{\Omega} \sup\;u(x)$, has zero measure,
then $u=u^*$.
\end{lem}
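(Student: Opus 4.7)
The plan is to treat this as a classical result from rearrangement theory and proceed essentially by citation, outlining the two standard ingredients. Part (i) is the P\'olya--Szeg\H{o} inequality, while part (ii) is the much deeper equality characterization of Brothers and Ziemer. Given that the paper has already dispatched Lemma \ref{hardy} with a single reference, I would do the same here, but it is worth recording the structure of the argument.

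For part (i), I would proceed via the coarea formula. Writing
\begin{equation*}
\int_{\Omega}|\nabla u|^r\,dx=\int_0^{M}\left(\int_{\{u=t\}}|\nabla u|^{r-1}\,d\mathcal{H}^{n-1}\right)dt,
\end{equation*}
applying H\"older's inequality on each level set and then the classical isoperimetric inequality to estimate $\mathcal{H}^{n-1}(\{u=t\})$ from below by the perimeter of the ball $\{u^*>t\}$ of the same measure, one recovers exactly the quantity $\int_{\Omega}|\nabla u^*|^r\,dx$. The proof that $u^*\in W_0^{1,r}(\Omega)$ follows from this integrability bound together with the vanishing of $u^*$ on $\partial\Omega$, which is automatic from the equimeasurability of $u$ and $u^*$ and the fact that $u=0$ on $\partial\Omega$.

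Part (ii) is the genuinely delicate half. The main obstacle is that the P\'olya--Szeg\H{o} inequality can be saturated by functions that are \emph{not} radially symmetric when the symmetrization $u^*$ has flat plateaus at intermediate levels: on such plateaus the original $u$ may be arbitrarily redistributed without changing its Dirichlet energy. The hypothesis $\left|\{x\in\Omega:\nabla u^*(x)=0,\ 0<u^*(x)<M\}\right|=0$ is precisely what rules this pathology out. Under this condition, the Brothers--Ziemer theorem forces equality to propagate through each level set, yielding that almost every superlevel set $\{u>t\}$ is (a translate of) the ball $\{u^*>t\}$; since all these balls are concentric with the common origin fixed by the boundary condition $u|_{\partial\Omega}=0$ on the ball $\Omega$, one concludes $u=u^*$.

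In summary, the proof I would record is simply a pointer: part (i) to the standard treatment in \cite{hardy}, and part (ii) to the Brothers--Ziemer equality theorem. No new computation is needed beyond verifying that the hypotheses of those theorems are met, which is immediate from the setting ($\Omega$ a ball, $u\geq 0$, $u\in W_0^{1,r}$, $r>1$).
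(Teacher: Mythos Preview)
Your proposal is correct and matches the paper's approach: the paper's own ``proof'' is the single line ``See \cite{brothers}'', i.e.\ a pure citation to the Brothers--Ziemer paper, and you have correctly identified both the relevant reference and the underlying structure (P\'olya--Szeg\H{o} plus the Brothers--Ziemer equality case). The only minor discrepancy is that the paper cites \cite{brothers} for both parts rather than splitting part (i) off to \cite{hardy}, but this is cosmetic.
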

\begin{proof}
See \cite{brothers}.
\end{proof}

The main purpose of this section is to prove the following theorem.

\begin{thm}\label{mainthem2}
Assume $\Omega$ is a ball centered at the origin, \eqref{fc} and
\eqref{pc} hold. In addition, $\widehat{p}$ and $\widehat{q}$ are
optimal solutions of \eqref{mmp}. Then these solutions are unique.
Indeed,
\begin{equation*}
\widehat{p}={p_0}_*,\qquad \widehat{q}={q_0}_*,
\end{equation*}
almost everywhere in $\Omega$.
\end{thm}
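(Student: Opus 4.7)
The plan is to show that every minimizer $(\widehat{p},\widehat{q})$ of \eqref{mmp} must coincide with $({p_0}_*,{q_0}_*)$, by comparing against the Schwarz decreasing rearrangement of the associated eigenfunction. Let $\widehat{u}$ denote the positive principal eigenfunction of $(\widehat{p},\widehat{q})$ normalized by $\|\widehat{u}\|_{L^2(\Omega)}=1$, and let $\widehat{u}^*$ be its Schwarz decreasing rearrangement. Three rearrangement facts combine: (i) Hardy--Littlewood (Lemma~\ref{hardy}), together with the identities $(\widehat{u}^2)^*=(\widehat{u}^*)^2$ and $\int_\Omega f^* g_*\,dx=\int_\Omega f_* g^*\,dx$ valid on a ball, gives
\[
\int_{\Omega}\widehat{p}\,\widehat{u}^2\,dx\;\geq\;\int_{\Omega}{p_0}_*(\widehat{u}^*)^2\,dx,\qquad \int_{\Omega}\widehat{q}\,\widehat{u}^2\,dx\;\geq\;\int_{\Omega}{q_0}_*(\widehat{u}^*)^2\,dx;
\]
(ii) P\'olya--Szeg\H{o} (Lemma~\ref{zeimer}(i)) gives $\|\widehat{u}\|_{H^1_0(\Omega)}^2\geq\|\widehat{u}^*\|_{H^1_0(\Omega)}^2$; and (iii) $\|\widehat{u}^*\|_{L^2(\Omega)}=1$. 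Since $(a,b,c)\mapsto a+\sqrt{a^2+b+c}$ is strictly increasing in each non-negative argument whenever $c>0$, inserting these bounds into the Rayleigh expression \eqref{ray formula} and then applying Lemma~\ref{lambdalem} to $({p_0}_*,{q_0}_*)$ yields
\[
\widehat{\lambda}\;\geq\;\int_{\Omega}{p_0}_*(\widehat{u}^*)^2\,dx+\sqrt{\Bigl(\int_{\Omega}{p_0}_*(\widehat{u}^*)^2\,dx\Bigr)^{2}+\int_{\Omega}{q_0}_*(\widehat{u}^*)^2\,dx+\gamma\|\widehat{u}^*\|^2_{H^1_0(\Omega)}}\;\geq\;\lambda_{{p_0}_*,{q_0}_*}.
\]
Because $\widehat{\lambda}$ is the infimum in \eqref{mmp}, equality must hold at every step; strict monotonicity then forces equality in each of P\'olya--Szeg\H{o} and the two Hardy--Littlewood inequalities, and $\widehat{u}^*$ must realize the minimum in \eqref{lambda} for $({p_0}_*,{q_0}_*)$.

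From this last fact, $\widehat{u}^*$ is (up to a constant) the principal eigenfunction associated with $({p_0}_*,{q_0}_*)$ and solves \eqref{mpde}, so the argument of Lemma~\ref{levelset} shows that $\widehat{\lambda}^2-2\widehat{\lambda}\,{p_0}_*-{q_0}_*$ is nonzero almost everywhere. Writing $\widehat{u}^*(x)=U(|x|)$, equation \eqref{mpde} becomes a radial ODE in which any $r_0$ with $U(r_0)>0$ and $U'(r_0)=0$ is forced to satisfy $U''(r_0)\neq 0$; the zeros of $U'$ in $\{U>0\}$ are therefore isolated, and the critical set $\{x\in\Omega:\nabla\widehat{u}^*(x)=0,\,0<\widehat{u}^*(x)<M\}$ (with $M=\mathrm{ess\,sup}\,\widehat{u}$) has measure zero. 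Lemma~\ref{zeimer}(ii) then upgrades equality in P\'olya--Szeg\H{o} to the pointwise identification $\widehat{u}=\widehat{u}^*$.

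Substituting $\widehat{u}=\widehat{u}^*$ into the saturated Hardy--Littlewood bound, $\int_\Omega\widehat{p}(\widehat{u}^*)^2\,dx=\int_\Omega{p_0}_*(\widehat{u}^*)^2\,dx$. Because $\widehat{u}^*$ has measure-zero level sets by Lemma~\ref{levelset}, so does $(\widehat{u}^*)^2$, and ${p_0}_*$ is a decreasing function of $(\widehat{u}^*)^2$ since the former is radially increasing on the ball while the latter is radially decreasing. Lemma~\ref{ber2} therefore identifies ${p_0}_*$ as the \emph{unique} minimizer of $p\mapsto\int_\Omega p(\widehat{u}^*)^2\,dx$ over $\overline{\mathcal{P}}$, forcing $\widehat{p}={p_0}_*$; the analogous argument applied with $\mathcal{Q}$ and $\overline{\mathcal{Q}}$ gives $\widehat{q}={q_0}_*$.

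The chief technical obstacle is verifying the measure-zero critical-set hypothesis in Lemma~\ref{zeimer}(ii), which is not automatic from general eigenfunction theory but becomes accessible once $\widehat{u}^*$ is known to solve \eqref{mpde} for $({p_0}_*,{q_0}_*)$: radial symmetry then reduces the equation to an ODE whose effective potential, by Lemma~\ref{levelset}, does not vanish, thereby ruling out any plateau in the radial profile and isolating the zeros of $U'$.
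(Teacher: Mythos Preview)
Your argument follows the same overall arc as the paper's: compare the Rayleigh expression at $(\widehat p,\widehat q,\widehat u)$ with that at $({p_0}_*,{q_0}_*,\widehat u^*)$ via Hardy--Littlewood and P\'olya--Szeg\H{o}, force equality throughout by minimality of $\widehat\lambda$, invoke the Brothers--Ziemer equality case to obtain $\widehat u=\widehat u^*$, and then appeal to the uniqueness clause of Lemma~\ref{ber2} to identify $\widehat p$ and $\widehat q$. The one substantive difference lies in how you verify the hypothesis of Lemma~\ref{zeimer}(ii). The paper proceeds via real analyticity: on each region where the piecewise constant coefficients $\widehat p_*,\widehat q_*$ are constant, equation~\eqref{mpde} has analytic coefficients, so $\widehat u^*$ is real analytic there; if the critical set had positive measure, some partial derivative $w=\partial_{x_i}\widehat u^*$ would vanish on a set of positive measure in one component and hence identically, contradicting either the distinct boundary values or Lemma~\ref{levelset}. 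You instead exploit the radial structure directly, reducing \eqref{mpde} to the ODE $-\gamma\bigl(U''+\tfrac{n-1}{r}U'\bigr)=(\widehat\lambda^2-2\widehat\lambda\,{p_0}_*-{q_0}_*)\,U$ and reading off that any zero of $U'$ in $\{U>0\}$ has $U''\neq0$ and so is isolated. This is more elementary---no appeal to analytic hypoellipticity is needed---and arguably more transparent given that the problem is already radially symmetric; the price is that it does not generalize beyond the ball. One small point to make explicit: $U''$ exists pointwise only away from the finitely many radii where ${p_0}_*$ or ${q_0}_*$ jump, but those finitely many spheres contribute zero measure regardless, so the conclusion stands.
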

\begin{proof}
Let $\widehat{u}$ be an eigenfunction corresponding to
$\widehat{\lambda}=\lambda_{\widehat{p},\widehat{q}}$ then
\begin{eqnarray*}
\lambda_{\widehat{p},\widehat{q}}&=&\int_{\Omega}\widehat{p}\widehat{u}^2dx+\sqrt{(\int_{\Omega}\widehat{p}\widehat{u}^2dx)^2+\int_{\Omega}\widehat{q}\widehat{u}^2dx+\gamma\|\widehat{u}\|^2_{H^1_0(\Omega)}}\\
&\geq& \int_{\Omega}{\widehat{p}_*}(\widehat{u}^{*})^{2}
dx+\sqrt{(\int_{\Omega}\widehat{p}_*(\widehat{u}^{*})^{2}dx)^2+\int_{\Omega}\widehat{q}_*(\widehat{u}^{*})^{2}dx+\gamma\|\widehat{u}^{*}\|^2_{H^1_0(\Omega)}}\\
&\geq& \lambda_{\widehat{p},\widehat{q}}.
\end{eqnarray*}
The above inequalities are written by lemma \ref{hardy} and lemma
\ref{zeimer}. This leads us to the equality
\begin{eqnarray}\label{equality}
\int_{\Omega}\widehat{p}\widehat{u}^2dx&+&\sqrt{(\int_{\Omega}\widehat{p}\widehat{u}^2dx)^2+\int_{\Omega}\widehat{q}\widehat{u}^2dx+\gamma\|\widehat{u}\|^2_{H^1_0(\Omega)}}\\
= \int_{\Omega}{\widehat{p}_*}(\widehat{u}^{*})^{2}
dx&+&\sqrt{(\int_{\Omega}\widehat{p}_*(\widehat{u}^{*})^{2}dx)^2+\int_{\Omega}\widehat{q}_*(\widehat{u}^{*})^{2}dx+\gamma\|\widehat{u}^{*}\|^2_{H^1_0(\Omega)}}\nonumber,
\end{eqnarray}
which yields
\begin{eqnarray}\label{equality}
\int_{\Omega}\widehat{p}\widehat{u}^2dx=\int_{\Omega}{\widehat{p}_*}(\widehat{u}^{*})^{2}&,&\quad\int_{\Omega}\widehat{q}\widehat{u}^2dx=\int_{\Omega}\widehat{q}_*(\widehat{u}^{*})^{2}dx,\\\|\widehat{u}\|^2_{H^1_0(\Omega)}&=&\|\widehat{u}^{*}\|^2_{H^1_0(\Omega)}.
\end{eqnarray}

We claim that $\widehat{u}(x)=\widehat{u}^*(x)$. To this end, we
will apply lemma \ref{zeimer} part $(\mathrm{ii})$. Therefore, it
should be shown that
\begin{eqnarray*}
\mathcal{A}=\{x\in \Omega:\quad \nabla\widehat{u}^*=0,\quad
0<\widehat{u}^*<M\},
\end{eqnarray*}
where $M=\|\widehat{u}\|_{L^{\infty}(\Om)}$, has zero Lebesgue
measure. Since $\widehat{p}_*$ and $\widehat{q}_*$ are two
characteristic functions, it should be mentioned that they must
actually have the form
\begin{eqnarray*}
\widehat{p}_*&=&\beta_p \chi_{\mathcal{D}_p},\qquad
\mathcal{D}_p=\{x \in \Om
:\widehat{u}^*(x)\leq t_p \},\\
\widehat{q}_*&=&\beta_q \chi_{\mathcal{D}_q},\qquad
\mathcal{D}_q=\{x \in \Om :\widehat{u}^*(x)\leq t_q \}.
\end{eqnarray*}
 where $t_p, t_q>0$. Suppose $|\mathcal{A}|>0$ and consider the following equation
\begin{eqnarray*}
-\gamma\Delta \widehat{u}^* + \widehat{q}_*(x)\widehat{u}^*+
2\lambda \widehat{p}_*(x) \widehat{u}^*= \lambda ^2
\widehat{u}^*,\quad \mathrm{in}\quad\Omega^\prime=\{x \in \Om
:\widehat{u}^*(x)\neq t_p,t_q\}.
\end{eqnarray*}
The region $\Omega^\prime$ is the union of three open connected
components  whose components have either one or two
rotationally symmetric surfaces as its boundary. In all components
of $\Omega^\prime$, $-\gamma\Delta \widehat{u}^* +
\widehat{q}_*(x)\widehat{u}^*+ 2\lambda \widehat{p}_*(x)
\widehat{u}^*= \lambda ^2 \widehat{u}^*$ is an elliptic equation
with real analytic coefficients and the function on its right hand
side is  real analytic too . Employing the analyticity
theorem \cite{john} yields that $\widehat{u}^*$ is a real
analytic function in every component. Hence, setting
$w=\frac{\partial}{\partial x_i} \widehat{u}^*$ where $i$ is an
integer, $1\leq i\leq n$, leads us to the equation
\begin{eqnarray*}
-\gamma\Delta w+ \widehat{q}_*(x)w+ 2\lambda \widehat{p}_*(x)
w= \lambda ^2 w,\quad \mathrm{in}\quad\Omega^\prime.
\end{eqnarray*}
Since $|\mathcal{A}|>0$ then in a component of $\Omega^\prime$
which we denote it by $\Om_1$, $w$ is zero in a set of positive
measure. For a real analytic function $w$ whose domain is a
connected open set $\Om_1$ we have either $w^{-1}\{0\}=\Om_1$ or
$|w^{-1}\{0\}|=0$ \cite{federer}. This implies that $w\equiv
0$ on $\Om_1$. Accordingly, $\widehat{u}^*$ is constant in the
$x_i$-direction on $\Om_1$. If $\Om_1$ has two rotationally
symmetric surfaces as its boundary, then $\widehat{u}^*$ attains
two different values of the set $\{0,t_p,t_q\}$ on the boundary
which leads us to a contradiction in view of its values in the
$x_i$-direction and continuity of it on $\overline{\Om}$. If
$\Om_1$ has a rotationally symmetric surface as its boundary, then
$\widehat{u}^*$ is a constant function on $\Om_1$ which is a
contradiction to lemma \ref{levelset}. These contradictions
establish the above claim.

 We have proved that $\widehat{u}(x)=\widehat{u}^*(x)$. Recall that level
 sets of $\widehat{u}(x)$ have  zero measure by lemma
 \ref{levelset} and there exists a decreasing function $\eta$ where
\begin{equation*}
\int_{\Omega} \eta(\widehat{u}^{2})\widehat{u}^{2} dx \leq
\int_{\Omega} p \widehat{u}^{2} dx, ~~~~~~~\qquad\qquad \forall ~p
\in\mathcal{P},
\end{equation*}
and the function $\eta(\widehat{u}^{2})$ is the unique minimizer
such that $\eta(\widehat{u}^{2})\in\mathcal{P}$ applying lemma
\ref{ber2}. On the other hand, $\widehat{u}^{2}$ is a radial
decreasing function and then $\eta(\widehat{u}^{2})$ is an
increasing radial function in the rearrangement class
$\mathcal{P}$. Indeed, $\eta(\widehat{u}^{2})={p_0}_*$ and we have
\begin{equation}\label{p0}
\int_{\Omega} {p_0}_*\widehat{u}^{2} dx \leq \int_{\Omega} p
\widehat{u}^{2} dx, ~~~~~~~\qquad\qquad \forall ~p \in\mathcal{P},
\end{equation}
for ${p_0}_*$ as the unique minimizer. Similarly, the relation
\begin{equation}\label{q0}
\int_{\Omega} {q_0}_*\widehat{u}^{2} dx \leq \int_{\Omega} q
\widehat{u}^{2} dx, ~~~~~~~\qquad\qquad \forall ~q \in\mathcal{Q},
\end{equation}
holds for ${q_0}_*$ as the unique minimizer. Finally, \eqref{p0} and
\eqref{q0} leads us to
\begin{equation*}
\widehat{p}={p_0}_*,\qquad \widehat{q}={q_0}_*,
\end{equation*}
the uniqueness assertion.

\end{proof}


\section{Physical Interpretations}\label{physinter}

In this section we will give an overview of our results in the
previous sections with an eye on their  physical importance.
 Let us start with
a numerical example whose  results coincide with findings in
sections 3 and 4.

\begin{figure}[h]\label{ffig}
  \centering
  \subfloat[quantum dot]{\label{ffig:q}\includegraphics[width=0.33\textwidth]{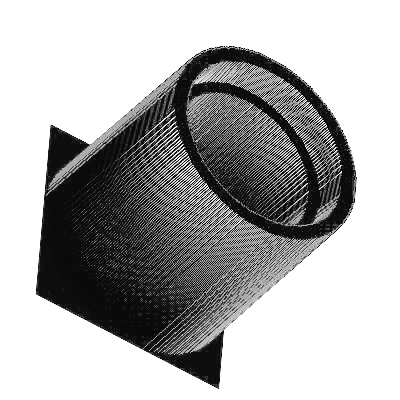}}
  \subfloat[wave function]{\label{ffig:w}\includegraphics[width=0.4\textwidth]{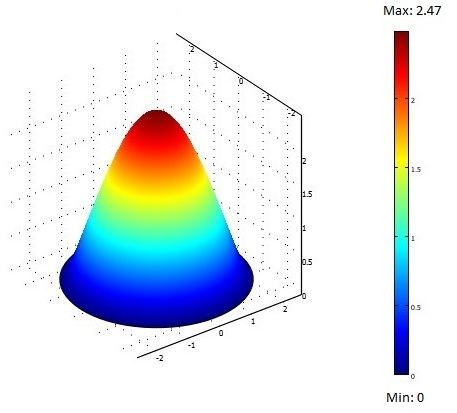}}
    \caption{Wave function and its corresponding quantum dot}
  \label{ffig:animals}
\end{figure}

Take  $\Omega$ to be a ball centered at the origin with radius
$R=2.4$ nm in $\mathbb{R}^2$. Let $q_0(x)$ be a characteristic
function (step function)  equals $2.13 $ eV in a subset
of $\Omega$ with area $3.85$ $\mathrm{nm}^2$ and is zero
elsewhere. Similarly, $p_0(x)$ is a characteristic function equals
$0.27$ eV in a subset of $\Omega$ with area $2$ $\mathrm{nm}^2$ and
is zero elsewhere. Assume the particle's mass is $m=7.81638\times
10^{-32}$ kg, then we have $\gamma=\hbar^2/2m=7.114043325\times
10^{-38}\;\mathrm{(J.s)^2/kg}$. Recall from \eqref{pc} that $\psi$ is the
normalized eigenfunction corresponding to the principal eigenvalue of the Laplacian
with Dirichlet's boundary condition.
It is well known that when $\Om$
is a ball centered at the origin, $\psi=\psi(r)$ is a radial
function and in our case can be calculated as follows:
\begin{equation*}
\psi(r)=4.528173484\times10^{8}J_0( 1.002010649\times10^9r),
\end{equation*}
where $J_0$ is the Bessel function of the first kind of order
zero \cite{henrot}. Now, by a simple calculation, it can be verified that $p_0$
and $q_0$ satisfy conditions \eqref{fc} and \eqref{pc}. Applying
theorems \ref{mainthem1} and \ref{mainthem2} we can deduce that
optimal solution of problem \eqref{mmp} is the first eigenvalue of
\eqref{mpde} corresponding to ${p_0}_*$ and ${q_0}_*$. In other words,
the best potential function is $V={q_0}_*+2\widehat{\lambda}{p_0}_*$.
Recall that ${p_0}_*$ and ${q_0}_*$ are the Schwarz
 increasing rearrangements of $p_0$ and $q_0$ respectively. This means that they
 are two radial characteristic  functions with two circular annular regions as their supports.
Relation \eqref{rea} leads us to the fact that   ${q_0}_*$ is a
step function with the height equals  the height of $q_0$ in an
annulus with outer radius $R$ and inner radius $2.13$ nm and
${p_0}_*$ is a step function with the height equals the height of
$p_0$  in an annulus with outer radius $R$ and smaller radius
$2.26$ nm. Denote by $h_p$ and $h_q$ the heights of ${p_0}_*$ and ${q_0}_*$ respectively. Then, the
radial potential $V={q_0}_*+2\widehat{\lambda}{p_0}_*$ is
\begin{equation*}
 V(r)=
 \left\{
     \begin{array}{ll}
      0\quad\;\; \;\; \quad \quad \quad 0<r\leq r_1,
       \\ h_q \quad \quad \quad \quad \;\; \;     r_1<r\leq r_2,
            \\ h_q+2\widehat{\lambda} h_p \quad \quad       r_{2}<  r \leq R,
            \end{array}
   \right.
 \end{equation*}
where $r_1=2.13$ $\mathrm{nm}$, $r_2=2.26$ $\mathrm{nm}$ and $R=2.4$ $\mathrm{nm}$. Using this potential,
one can construct a three dimensional nanostructure on $\Om$ which is made of two concentric cylinder nested within each other. These cylinders
have heights $h_q$ and $h_q+2\widehat{\lambda} h_p$. When the free carrier is trapped within this structure, the structure
can be considered as a quantum dot \cite{mas}. The optimal quantum dot  is shown in figure \ref{ffig:q}
schematically.

 Inserting ${p_0}_*$ and ${q_0}_*$ into \eqref{mpde}, one can determine the optimal ground
state energy employing the standard finite element Galerkin method
which yields $\widehat{\lambda}^2=0.45$ $(\mathrm{eV})^2$ for this typical
example.  The wave function corresponding to the minimum ground state energy  is illustrated
in schematic figure \ref{ffig:w}.

 We can see that $\lambda^2$ is
less than potential  $V$  when $r_1\leq r \leq R$. This means that in our quantum dot   the energy is confined.

  \textbf{Acknowledgement.}
 The authors would like to express their  deep gratitude to anonymous referees for
helpful comments and useful suggestions.
  We would like to acknowledge
 Professor Heinrich Voss for his thorough  reading of the
manuscript that help us to improve the presentation of the paper.

\section{References}

\end{document}